\documentclass[letterpaper,twocolumn,10pt]{article}
\usepackage{usenix}

\usepackage{amsmath,amssymb,amsthm}
\usepackage{mathtools}

\usepackage{booktabs}
\usepackage{array}
\usepackage{adjustbox}

\usepackage{algorithm}
\usepackage{algpseudocode}

\usepackage{enumitem}
\usepackage{xspace}

\usepackage{tikz}
\usetikzlibrary{arrows.meta,positioning,fit,backgrounds,calc}

\usepackage{url}
\usepackage{cleveref}

\newtheorem{theorem}{Theorem}
\newtheorem{lemma}{Lemma}
\newtheorem{corollary}{Corollary}
\newtheorem{definition}{Definition}
\newtheorem{proposition}{Proposition}
\newtheorem{remark}{Remark}

\newcommand{\HDKEM}{\textsc{HDKEM}\xspace}
\newcommand{\Gandalf}{\textsc{Gandalf}\xspace}
\newcommand{\FalconRS}{\textsc{FalconRS}\xspace}
\newcommand{\XWing}{\textsc{X-Wing}\xspace}
\newcommand{\Shadowfax}{\textsc{Shadowfax}\xspace}
\newcommand{\RingXKEM}{\textsc{RingXKEM}\xspace}
\newcommand{\Ascon}{\textsc{Ascon}\xspace}

\newcommand{\adv}{\mathsf{Adv}}
\newcommand{\eck}{\textsf{eCK}}
\newcommand{\eps}{\varepsilon}
\newcommand{\epsWI}{\eps_{\mathsf{WI}}}

\newcommand{\Auth}{\mathsf{Auth}}
\newcommand{\Leak}{\mathcal{L}}
\newcommand{\Disc}{\mathcal{D}}
\newcommand{\Simm}{\mathsf{Sim}}

\newcommand{\real}{\mathsf{real}}
\newcommand{\sm}{\mathsf{sim}}

\newcommand{\ksh}{k_{\mathsf{sh}}}
\newcommand{\kconf}{k_{\mathsf{cf}}}
\newcommand{\skPRF}{\mathsf{skPRF}}

\newcommand{\DST}{\mathsf{DST}}
\newcommand{\Hinf}{H_\infty}
\newcommand{\qh}{q_h}

\newcommand{\ID}{\mathsf{ID}}
\newcommand{\len}{\mathsf{len}}
\newcommand{\kid}{\mathsf{kid}}

\newcommand{\rvk}{\mathsf{rvk}}
\newcommand{\rsk}{\mathsf{rsk}}

\begin{document}
%-------------------------------------------------------------------------------

% Do not print a date.
\date{}

% USENIX title format.
% USENIX title format.
\title{\Large \bf
HDKEM: Authentication Allocation for Deniable Hybrid Key Exchange}

\author{
{\rm Kazi Hassan Shakib}\\
Kansas State University\\
kshakib@ksu.edu
\and
{\rm Muhammad Asfand Hafeez}\\
Florida Atlantic University\\
mhafeez2024@fau.edu
\and
{\rm Arslan Munir}\\
Florida Atlantic University\\
arslanm@fau.edu
}

\maketitle

\begin{abstract}
Deniable authenticated key exchange (DAKE) establishes a shared key without leaving transferable evidence that a party participated. We study DAKE protocols that authenticate each fresh session with one ring signature and show that their deniability mechanisms fall into three authentication allocations: sharing an authentication witness, hiding the signing witness under a fixed ring, or varying the verified ring between real and simulated executions. For fixed-ring protocols, witness hiding yields a necessary relation between the ring signature's witness-indistinguishability gap and the protocol's $(\mu,\delta)$ deniability parameters.

We instantiate witness hiding in \HDKEM, a three-slot hybrid that combines ML-KEM-768, X25519, and an optional pairwise information-theoretic source through a split-key PRF. \HDKEM binds all public session values, including a QKD key identifier when present, into authentication and key derivation. Under the stated component models, it achieves eCK key indistinguishability, forward secrecy, and strong local deniability.

We evaluate \HDKEM with \FalconRS and \Gandalf. Across $2{,}000$ primitive operations, median signing/verification times are $9.20/1.02$\,ms for \FalconRS and $18.61/0.97$\,ms for \Gandalf in our reference environment. In source-backed mode, the corresponding TCP-application costs are $8{,}607$\,B and $8{,}555$\,B per handshake. A separate $500$-session server experiment exercises the complete \FalconRS source-backed path and reports a median server time of $16.29$\,ms including local I/O.
\end{abstract}

\section{Introduction}\label{sec:intro}

Deniability prevents a protocol transcript from becoming transferable
evidence that a party participated. This goal becomes harder during
post-quantum migration: ordinary post-quantum signatures provide
authentication but make participation publicly attributable. Ring
signatures offer a natural alternative because either member of a
designated ring can account for an accepted
signature~\cite{Hashimoto22,Hashimoto25,Gajland24,Gajland25,Brendel22,Collins24}.

Existing post-quantum deniable key-exchange constructions obtain
simulation capability in different ways: by sharing an authentication
secret, relying on indistinguishability between signing witnesses, or
changing the ring used in simulation. These choices affect both who can
authenticate and the deniability loss inherited by the protocol, yet
existing analyses do not provide a common structural account of these
trade-offs.

This paper asks: \emph{where must a deniable protocol place the
capability needed to simulate an accepting transcript?} For fresh
sessions authenticated by a single ring signature, we show that this
capability has three allocations: \emph{witness sharing},
\emph{witness hiding}, and \emph{varying ring}. For the fixed-ring class
studied here, witness hiding necessarily satisfies
\[
\delta+(\mu-1)\ge\epsWI,
\]
where $\epsWI$ measures the distinguishability of signatures generated
with the two witnesses. This connects handshake authentication structure
to the $(\mu,\delta)$ deniability notion of Katsumata et
al.~\cite{Katsumata25}.

We instantiate witness hiding in \HDKEM, a hybrid key exchange combining
ML-KEM-768, X25519, and an optional pairwise source such as QKD through a
split-key PRF. \HDKEM authenticates all public session values and achieves
eCK key indistinguishability, forward secrecy, and strong local
deniability under the stated component assumptions. Its fixed-ring design
makes the central trade-off explicit: authentication capability remains
limited to the two endpoints, while deniability inherits the ring
signature's $(\mu,\delta)$ cost.

\subsection{Contributions}\label{sec:contrib}
Our contributions are as follows.

\paragraph{Authentication allocation:}
We define authentication and simulation sets and prove that any simulator producing fresh accepting transcripts must itself possess accepting authentication capability. For the fixed-ring class $\mathcal{F}$, this yields the necessary witness-hiding relation
$\epsWI\le\delta+(\mu-1)b^\star\le\delta+(\mu-1)$.
The resulting taxonomy separates three design choices---witness sharing, witness hiding, and varying ring---and explains why fixed-ring and varying-ring DAKE constructions incur different deniability costs.

\paragraph{Pairwise DAKE deniability:}
We specialize the bundled deniability framework of Katsumata et al.~\cite{Katsumata25} to interactive pairwise exchange. The model distinguishes leakage from the accused party and disclosure by the accuser, local and global denial, and honest-but-curious and malicious insider accusers. The simulator must be executable from the accuser's own secrets and public information.

\paragraph{Robust hybrid handshake:}
We construct \HDKEM from ML-KEM-768, X25519, and an optional pairwise source using a split-key PRF. We characterize the four eCK exposure cases, derive the source-entropy requirements under two PRF analyses, and bind the QKD key identifier into both authentication and key derivation. For QKD-backed deployments, the source interface is reduced to the key-ID-aware CK$^+$ oracle of H\"ovelmanns et al.~\cite{Hovelmanns25}.

\paragraph{Security analysis:}
We prove NIKE-slot soundness, OR-security of the combiner, eCK key indistinguishability, forward secrecy, and strong local deniability, with complementary CryptoVerif and ProVerif models included in the artifact. We also give the source-less two-signature bound, direct and generic multi-session deniability bounds, and an impossibility result for global deniability with the fixed ring.

\paragraph{Concrete evaluation:}
We evaluate the complete protocol stack with \FalconRS and \Gandalf. The measurements show that ring-signature authentication dominates local cryptographic cost, while ML-KEM, X25519, source derivation, and split-key derivation are comparatively small on the reference platform. Exact serialization gives source-backed TCP-application costs of $8{,}607$\,B with \FalconRS and $8{,}555$\,B with \Gandalf; a separate $500$-session server experiment exercises the complete \FalconRS path.

\subsection{Scope and Limitations}\label{sec:scope}
\HDKEM provides local, not global, deniability (Proposition~\ref{prop:noglobal}); \S\ref{sec:aeadmode} gives an optional mode that hides the ring signature from outsiders. The confidentiality and authentication theorems use the proof models stated in \S\ref{sec:sec}. In particular, the published $(\mu,\delta)$ deniability parameters used for \FalconRS and \Gandalf remain classical-ROM, while QROM unforgeability is available for published \Gandalf~\cite{Beckmann26}. The QKD result uses the CK$^+$ key-delivery-oracle model of H\"ovelmanns et al.~\cite{Hovelmanns25}. Implementation details are provided in \S\ref{sec:eval}, and the complementary CryptoVerif and ProVerif models are included in the artifact.

\section{Related Work}\label{sec:related}
\paragraph{Deniable authentication and key exchange:}
Dwork, Naor, and Sahai~\cite{Dwork98} and Di Raimondo et al.~\cite{DiRaimondo06} established deniable authentication and deniable key exchange as simulation problems. Pass~\cite{Pass03} emphasized that denial must be efficiently realizable by the denying party, and Dodis et al.~\cite{Dodis09} studied composability and on-line deniability. We adopt the same implementability principle and ask where simulation capability must reside when a fresh session is authenticated by one ring signature.

\paragraph{Post-quantum deniable handshakes:}
Hashimoto et al.~\cite{Hashimoto22} give a post-quantum, state-leakage-secure X3DH-style construction; Brendel et al.~\cite{Brendel22} study asynchronous deniable key exchange; and Collins et al.~\cite{Collins24} obtain deniability without ring signatures through split-KEM techniques. Hashimoto, Katsumata, and Wiggers~\cite{Hashimoto25} introduce BAKE and \RingXKEM, whose auxiliary unauthenticated ring member provides the varying-ring case in our taxonomy. Katsumata et al.~\cite{Katsumata25} provide the bundled deniability model and the $(\mu,\delta)$ metric used in our concrete analysis. Our allocation theorem is complementary: it characterizes the authentication capability required by fixed-ring single-signature designs.

\paragraph{Ring signatures for deniable AKEM:}
\Gandalf~\cite{Gajland24}, \Shadowfax~\cite{Gajland25}, \FalconRS, and the ring-signature instantiations studied in BAKE show how compact post-quantum ring signatures can support deniable key establishment. \Shadowfax is the closest deniable AKEM combiner to \HDKEM, but it has no independent information-theoretic source slot. \RingXKEM obtains stronger global-style denial by varying the ring; \HDKEM instead studies the fixed-ring witness-hiding point and makes its deniability cost explicit.

\paragraph{Hybrid key establishment and combiners:}
Bindel et al.~\cite{Bindel19} study hybrid KEMs and authenticated key exchange, while Giacon et al.~\cite{Giacon18} show why robust KEM combiners require more than bare XOR. Don, Fehr, and Huang~\cite{DonFehrHuang22} analyze split-key PRFs against quantum multi-oracle adversaries, which supplies the QROM treatment used for our source slot. \XWing~\cite{Connolly24} combines ML-KEM-768 and X25519 and provides the natural two-component confidentiality baseline. Muckle~\cite{Dowling20} gives a framework for quantum-secure hybrid key exchange; KEM--QKD designs authenticated with a shared MAC key fall on the witness-sharing side of our allocation view.

\paragraph{QROM ring signatures and QKD-aware AKE:}
Beckmann and Majenz~\cite{Beckmann26} give QROM reductions for generic AOS and ring-trapdoor/RPSF ring signatures, including strong unforgeability for published \Gandalf. Their result does not establish the classical $(\mu,\delta)$ deniability relation used here, so authentication and deniability retain distinct proof-model qualifications. H\"ovelmanns et al.~\cite{Hovelmanns25} model QKD as a key-delivery oracle in CK$^+$ with key identifiers, leakage, and replacement queries. \HDKEM adopts that interface and authenticates the public key identifier as part of the session transcript.

\section{Preliminaries}\label{sec:prelim}

Let $\lambda$ denote the security parameter. QPT denotes quantum polynomial time,
$\Hinf(X)$ min-entropy, $\qh$ a quantum adversary's random-oracle query budget, and
$q_d$ its decapsulation or distinguishing-query budget.

\subsection{KEM and NIKE}\label{sec:kem}
We use standard KEM and NIKE syntax and defer generic definitions to the appendix/full version~\cite{Katsumata25}. HDKEM instantiates the KEM slot with ML-KEM-768 and
the NIKE slot with X25519. We write $\eps_{\mathsf{mlkem}}$ for the QROM IND-CCA2 advantage and
$\eps_{\mathsf{dh}}$ for the Gap-DH advantage; Table~\ref{tab:prims} records the concrete models.
Crucially, the transcript $\tau$ includes the ML-KEM ciphertext $c$, so changing $c$ changes the
split-key PRF context even if two ciphertexts decapsulate to the same value. Thus
Theorem~\ref{thm:comb} needs no separate ciphertext second-preimage assumption.

\subsection{Comparing Two Distributions}\label{sec:hs}

Deniability compares two distributions: transcripts produced by an honest party
and transcripts an accuser can generate alone. Classical indistinguishability
allows only a negligible additive gap. The following relation additionally
permits a multiplicative gap, recovering the classical notion when $\mu=1$.
We use it in Definitions~\ref{def:rsden}, \ref{def:simset},
and~\ref{def:mudelta}.

\begin{definition}[$(\mu,\delta)$ relation]\label{def:hs}
Let $X$ and $Y$ be distributions, with $\mu\ge 1$ and $\delta\ge 0$. We write
$X\preceq_{\mu,\delta}Y$ if, for every efficient distinguisher $\mathcal{D}$,
\[
  \Pr[\mathcal{D}(X)=0]
  \le
  \mu\,\Pr[\mathcal{D}(Y)=0]+\delta .
\]
\end{definition}

For $\mu=1$, this recovers computational indistinguishability with additive
advantage $\delta$; for $\mu>1$, part of the gap may instead be carried
multiplicatively (Corollaries~\ref{cor:classical}--\ref{cor:relaxed}).
Because the definition quantifies over every $\mathcal{D}$, it also applies to
the complement of any distinguisher and therefore constrains both orientations.
Theorem~\ref{thm:alloc} uses this fact.

\subsection{Ring Signatures}\label{sec:rsprim}

A ring signature lets a signer authenticate $M$ with respect to a ring $R$ of verification keys containing its own:
$\mathsf{RS.Sign}(\rsk,M,R)$.

\begin{definition}[Unforgeability]\label{def:eufcma}
A ring-signature scheme is EUF-CMA-secure with advantage
$\eps_{\mathsf{sig}}$ if no QPT adversary, given the verification keys of a ring
and adaptive signing-oracle access for that ring, produces a valid signature on
a previously unqueried message except with probability
$\eps_{\mathsf{sig}}$.
\end{definition}

\begin{definition}[Anonymity]\label{def:anon}
A ring-signature scheme is $\eps_{\mathsf{anon}}$-anonymous if, given a
signature generated by one of two ring members, no adversary identifies the
signer with advantage greater than $\eps_{\mathsf{anon}}$ over random guessing.
\end{definition}

\begin{definition}[Witness-indistinguishability advantage]\label{def:wi}
Fix a ring $R$ and a signer-independent, efficiently samplable message
distribution $T$. For $b\in\{0,1\}$ let
$\Sigma_b^T=(\tau,\mathsf{RS.Sign}(\rsk_b,\tau,R))$ for $\tau\leftarrow T$.
Then $\epsWI$ is the maximum distinguishing advantage between
$\Sigma_0^T$ and $\Sigma_1^T$. A fixed message is the point-mass case. In
Theorem~\ref{thm:alloc}, $T$ is the fresh transcript distribution induced by
$\Pi$; this avoids silently replacing a pointwise signature comparison by a
comparison on a random protocol transcript.
\end{definition}

\begin{definition}[Deniability {\cite[Def.~9]{Katsumata25}}]\label{def:rsden}
Let $\mu:\mathbb{N}\times\mathbb{N}\rightarrow\mathbb{R}^+$, and let $Q$
bound the number of signing queries. Consider an adversary holding
\emph{every} keypair and querying an oracle that signs under $\rsk_{u_b}$ for a
hidden bit $b$ and ring $\{\rvk_{u_0},\rvk_{u_1}\}$. The scheme is
$(\mu,\delta)$-\emph{deniable} if the two resulting output distributions satisfy
Definition~\ref{def:hs} with parameters $\mu(\lambda,Q)$ and negligible
$\delta$.
\end{definition}

Definition~\ref{def:rsden} is strictly weaker than
Definition~\ref{def:anon}, and this weaker property is sufficient for
deniability: an accused party need only demonstrate that its simulator
\emph{could} have generated the evidence, rather than that the two generation
procedures are equally likely. After the correction of~\cite{Gajland24},
Katsumata et al.~\cite{Katsumata25} report
$\eps_{\mathsf{anon}}=2^{-30}$ for \FalconRS, compared with
$(\mu,\delta)=(1+2^{-27},2^{-57})$ for deniability; \Gandalf achieves
provable deniability at similar parameters. No theorem below relies on
$\eps_{\mathsf{anon}}$; we retain it only to quantify the cost of the
alternative notion.

\subsection{Split-Key Pseudorandom Functions}\label{sec:skprf}

\begin{definition}[Split-key PRF]\label{def:skprf}
A function
\[
  F:\mathcal{S}^n\times\{0,1\}^*\rightarrow\{0,1\}^{512}
\]
is a split-key PRF if the following holds for every slot $i\in[n]$.
A QPT adversary $\mathcal{A}$ may adaptively choose every
$\kappa_j$, $j\neq i$, and a context string $\mathsf{ctx}$, while
$\kappa_i$ is sampled independently from a distribution satisfying
$\Hinf(\kappa_i)\ge\beta$ and is hidden from $\mathcal{A}$. Then
$\mathcal{A}$ distinguishes
\[
  F(\kappa_1,\ldots,\kappa_n;\mathsf{ctx})
\]
from a uniform string of equal length only with negligible advantage.
Each query must use a \emph{fresh} context: no two queries may induce the same
encoded input to $F$ for any value of the hidden slot $\kappa_i$.
\end{definition}

We formulate Definition~\ref{def:skprf} for min-entropy rather than exact
uniformity because the pre-shared-key source of \S\ref{sec:source} need not be
uniform. Context freshness is precisely the condition required by
Don, Fehr, and Huang~\cite{DonFehrHuang22}. \HDKEM satisfies it because
$\mathsf{ctx}$ is the public transcript $\tau$ of Algorithm~\ref{alg:handshake}, which is unique per
session through the ML-KEM ciphertext $c$. This is also condition F3 of
Definition~\ref{def:class}; the same condition makes
Theorem~\ref{thm:alloc} applicable, supplies the $G_2$ hop in
Theorem~\ref{thm:eck}, and discharges the split-key PRF freshness requirement.

\begin{lemma}[Split-key PRF over a min-entropy slot]\label{lem:skprf}
Model $F$ as a quantum random oracle, and suppose some $\kappa_i$ is independent
of all other inputs and the context and satisfies
$\Hinf(\kappa_i)\ge\beta$. Any QPT distinguisher making at most $q$ quantum
queries then has advantage at most
\[
  8(q+1)^2\,2^{-\beta}.
\]
\end{lemma}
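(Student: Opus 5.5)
Model $F$ as a quantum random oracle $H$, and let the real game give $\mathcal{A}$ oracle access to $H$ together with the challenge $F(\kappa_1,\ldots,\kappa_n;\mathsf{ctx})=H(\kappa_1,\ldots,\kappa_n,\mathsf{ctx})$. The ideal game replaces this challenge with an independent uniform 512-bit string. The plan is a one-way-to-hiding (O2H) reprogramming argument. Let $x^\star=(\kappa_1,\ldots,\kappa_n,\mathsf{ctx})$ be the challenge input. In the ideal game we may equivalently reprogram $H$ at $x^\star$ to a fresh uniform value. The two games then differ only on the single point $x^\star$, which is unknown to $\mathcal{A}$ because its slot-$i$ component $\kappa_i$ is hidden. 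Context freshness (Definition~\ref{def:skprf}) ensures that no other challenge or query fixes the encoded input. Hence only the $q$ adversarial quantum queries can touch $x^\star$, plus one extra classical evaluation used to form the challenge, which accounts for $q+1$.

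First, I apply the semi-classical or standard O2H lemma (Ambainis--Hamburg--Unruh) with the punctured set $S=\{x^\star\}$. This bounds the distinguishing advantage by $2(q+1)\sqrt{P_{\mathrm{find}}}$. Here $P_{\mathrm{find}}$ is the probability that measuring a uniformly chosen query of $\mathcal{A}$ yields an element of $S$. Second, I bound $P_{\mathrm{find}}$. The adversary chooses every $\kappa_j$ with $j\neq i$ and the context, so those components may be treated as known. Measuring a query then amounts to producing a guess for $\kappa_i$. Because the extraction algorithm runs without $H(x^\star)$, or with it replaced by independent randomness, its output is independent of $\kappa_i$. The independence hypothesis of the lemma delivers exactly this, so the guess hits with probability at most $2^{-\beta}$. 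This gives $P_{\mathrm{find}}\le 2^{-\beta}$, and a bound of order $(q+1)\,2^{-\beta/2}$.

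The stated form $8(q+1)^2 2^{-\beta}$ is a squared-style bound without a square root. To obtain it, I would instead use the Don--Fehr--Huang adaptive-reprogramming or split-key lemma~\cite{DonFehrHuang22}, or the measure-and-reprogram or compressed-oracle bound. For a single hidden high-min-entropy input, these give a distinguishing advantage of at most $c\,(q+1)^2\cdot\max_x\Pr[\kappa_i=x]$ with $c\le 8$. Concretely, I would run Zhandry's compressed oracle on the point $x^\star$. The database contains $x^\star$ with amplitude bounded by $\sqrt{2}\,(q+1)\cdot 2^{-\beta/2}$ after $q+1$ queries. Squaring this, and accounting for the trace-distance-to-advantage conversion factor, yields $8(q+1)^2 2^{-\beta}$.

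The main obstacle is this last step: pinning the constant $8$ and obtaining the quadratic-in-$q$, linear-in-$2^{-\beta}$ form, rather than the square-root O2H form. I would cite the Don--Fehr--Huang bound directly, which is stated for exactly this multi-key, adaptively chosen other-slot setting. I would then check two things. The first is that their uniform-key hypothesis extends to min-entropy $\beta$; it does, since every step uses only the guessing probability $2^{-\beta}$. The second is that their freshness condition matches the one required in Definition~\ref{def:skprf}.
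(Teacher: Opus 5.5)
\textbf{Verdict: genuine gap.} The step that produces the squared form $8(q+1)^2 2^{-\beta}$ is not established, and neither tool you propose for it works.

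Your O2H step is sound, but it only gives $2(q+1)\,2^{-\beta/2}$. That is weaker than the claimed $8(q+1)^2 2^{-\beta}$ whenever $4(q+1)<2^{\beta/2}$, which is exactly when the claimed bound is below $1/2$. So the lemma rests entirely on your third step.

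The Don--Fehr--Huang result~\cite{DonFehrHuang22} is itself a square-root bound. The paper says so in Remark~\ref{rem:twobounds}, and Appendix~\ref{app:prf} evaluates it as $2^{-45.5}$ at $\beta=256$, against $2^{-133}$ for the monolithic bound. Citing it therefore cannot give the stated form.

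Your compressed-oracle sketch has the same defect in another guise. The squared amplitude of $x^\star$ in the database bounds the probability of \emph{finding} $x^\star$. The distinguishing advantage is controlled by the norm of the difference between the two final states, i.e.\ by the amplitude itself, not its square. ``Squaring and accounting for the conversion factor'' is precisely the step that needs proof, and the constant $8$ is fitted to the target rather than derived. A bound linear in $2^{-\beta}$ for \emph{distinguishing} must use the average over the unknown $\kappa_i$, so that first-order terms cancel; your sketch never does this.

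The paper takes its constant and its $(q+1)^2$ directly from the generic search bound of H\"ulsing, Rijneveld, and Song~\cite[Thm.~1]{HRS16}. That bound says no $q$-query algorithm finds a marked point of an oracle with per-point marking probability at most $\lambda_{\mathsf{mk}}$, except with probability $8(q+1)^2\lambda_{\mathsf{mk}}$. The paper argues that distinguishing requires locating $\kappa_i$. It then uses independence to treat this as unstructured search with $\lambda_{\mathsf{mk}}\le 2^{-\beta}$. The $+1$ in that bound is the extra query used to check a candidate, not the evaluation that forms the challenge.

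Your own O2H computation shows why the informal step ``distinguishing requires finding'' normally costs a square root. The HRS16 search bound is itself derived from a $q^2\lambda$ bound (due to Zhandry) for \emph{distinguishing} a $\lambda$-Bernoulli-marked oracle from the all-zero oracle. A rigorous proof of the stated form should therefore:
\begin{enumerate}
\item reduce the PRF game directly to that marked-oracle distinguishing problem, with the challenge value planted at the hidden $\kappa_i$ in the real game and absent in the ideal one;
\item check that a single key of min-entropy $\beta$ is covered, since the bound is stated for independent Bernoulli marking.
\end{enumerate}
That reduction is the missing idea, not an amplitude bound or the multi-oracle lemma.
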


\begin{proof}[Proof sketch]
Distinguishing requires locating the hidden $\kappa_i$. Independence makes this
an unstructured search problem, so the generic quantum-search
bound~\cite[Thm.~1]{HRS16} applies; its quadratic dependence on $q$ is
optimal~\cite{Bennett97}. Appendix~\ref{app:fullproofs} gives the full proof.
\end{proof}

\begin{remark}[Two analyses, two entropy requirements]\label{rem:twobounds}
Two published analyses give different bounds for the concrete construction of
\S\ref{sec:skprfinst}; the security theorems of \S\ref{sec:sec} hold under
either, but the resulting entropy requirement differs. Modeling $F$ as a
monolithic quantum random oracle gives Lemma~\ref{lem:skprf}, requiring
$\Hinf(k_3)\ge256$ to obtain
$\eps_{\mathsf{prf}}\le2^{-133}$ at $q=2^{60}$. The multi-oracle analysis of
Don, Fehr, and Huang~\cite{DonFehrHuang22} gives a weaker square-root bound,
requiring $\Hinf(k_3)\ge384$ to keep $\eps_{\mathsf{prf}}$ below ML-KEM's
One-Way-to-Hiding loss of $2^{-98}$. Thus a $32$- or $48$-byte QKD/pre-shared
key suffices under the respective analyses at no additional communication cost.

At $\beta=128$, neither analysis provides the intended guarantee: the
monolithic bound is approximately $2^{-5}$ at $q=2^{60}$, while the
multi-oracle bound is vacuous. Thus merely
requiring ``128-bit security'' is materially weaker than the guarantees used
here. Appendix~\ref{app:prf} derives both bounds.
\end{remark}

\subsection{Concrete Instantiation}\label{sec:skprfinst}

All combiner theorems are stated against Definition~\ref{def:skprf}
(Appendix~\ref{app:staticcmp}). Our concrete construction instantiates $F$ with
SHAKE256:
\begin{equation}\label{eq:skprf}
  \skPRF(\vec{\kappa};\mathsf{ctx})
  =
  \mathsf{SHAKE256}\!\left(
    \DST
    \| \langle\kappa_1\rangle
    \| \cdots
    \| \langle\kappa_n\rangle
    \| \langle\mathsf{ctx}\rangle,\,
    512
  \right),
\end{equation}
where $\langle x\rangle=\len_2(x)\|x$ prefixes $x$ with a two-byte length and
$\DST=\texttt{"HDKEM-v1-skPRF"}$. In the ordinary and source-less modes we
squeeze $512$ bits for the two logical $256$-bit values
$(k_{\mathsf{sh}},k_{\mathsf{cf}})$; participation-hiding mode squeezes
$768$ bits and additionally obtains a logical $256$-bit
$k_{\mathsf{enc}}$. 
Bare XOR does not satisfy Definition~\ref{def:skprf}: an adversary knowing
$n-1$ components can cancel them and reduce the combiner to its remaining
component~\cite{Giacon18}. We use $\eps_{\mathsf{prf}}$ for whichever bound is
selected in Remark~\ref{rem:twobounds}---Lemma~\ref{lem:skprf} in the
monolithic model or~\cite[Cor.~1]{DonFehrHuang22} in the multi-oracle
analysis---rather than as an additional error term. Replacing SHAKE256 with a
standard-model split-key PRF leaves the theorems unchanged, with
$\eps_{\mathsf{prf}}$ replaced by that PRF's distinguishing advantage.

\subsection{Pairwise Source Interface}\label{sec:source}

The third combiner input is delivered by a pairwise source service. A client-side call returns a public handle and a secret,
\[
  (\kid_3,k_3)\leftarrow\mathsf{Src.Get}(C,S),
\]
and the partner obtains the same secret with $k_3\leftarrow\mathsf{Src.GetWithID}(S,C,\kid_3)$. The secret $k_3$ is consumed once and never transmitted; the public $\kid_3$ is included in the signed transcript and split-key-PRF context. For QKD, this interface is the key-ID-aware QKD oracle of H\"ovelmanns et al.~\cite{Hovelmanns25}; for a pre-shared source, $\kid_3$ may be a monotone per-pair index.

We write $\eps_{\mathsf{src}}=\eps_{\mathsf{ks}}=\eps_{\mathsf{qkd}}$
for a QKD source, where $\eps_{\mathsf{qkd}}$ is the trace distance of
the concrete QKD service from the ideal key-delivery interface. For a
pre-shared source, $\eps_{\mathsf{src}}=0$ under the independence and
$\Hinf(k_3)\ge\beta$ premise of Lemma~\ref{lem:skprf}, with its loss
already included in $\eps_{\mathsf{prf}}$. Without the source, the
combiner reduces to $n=2$.

\subsection{Authenticated Encryption}\label{sec:aeadprim}
\Ascon-128a is used as an AEAD, with advantage
$\eps_{\mathsf{aead}}$ against nonce-respecting adversaries, for application
data and key confirmation and, optionally, in the mode of
\S\ref{sec:aeadmode}. The split-key PRF outputs $256$-bit logical values,
whereas Ascon-128a consumes a $128$-bit key; the implementation therefore
sets $K_{\mathsf{cf}}=\operatorname{Trunc}_{128}(k_{\mathsf{cf}})$ and, in
participation-hiding mode,
$K_{\mathsf{enc}}=\operatorname{Trunc}_{128}(k_{\mathsf{enc}})$.
We retain $256$-bit logical subkeys at the combiner interface and truncate
to $128$ bits only for the lightweight \Ascon-128a instantiation, whose
security is correspondingly bounded by its $128$-bit key size.

\begin{table}[t]
\centering
\caption{Primitive assumptions and concrete bounds. Anonymity is comparison-only.}
\label{tab:prims}
\scriptsize
\setlength{\tabcolsep}{2.5pt}
\begin{tabular}{@{}p{1.25cm}p{1.28cm}p{0.92cm}p{2.25cm}@{}}
\toprule
Primitive & Notion & Model & Bound \\
\midrule
ML-KEM-768 & IND-CCA2 & QROM &
$2^{-128}+\sqrt{\qh}\,2^{-128}$ \\
X25519 & Gap-DH & classical &
$2^{-112}$; $\approx1$ vs.\ quantum \\

\Gandalf & SUF-CRA & QROM &
Beckmann--Majenz Thm.~13~\cite{Beckmann26} \\
Ring sig. & $(\mu,\delta)$-deny & cl.\ ROM &
$(1{+}2^{-27},2^{-57})$~\cite{Katsumata25} \\
\emph{RPSF ring sig.} & \emph{ANON} & \emph{QROM} &
\emph{generic full-key-exposure bound~\cite{Beckmann26}} \\
\emph{(ring)} & \emph{QROM deniability} & \emph{QROM} &
\emph{required deniability property unresolved} \\
SHAKE256 & split-key PRF & QROM &
Lem.~\ref{lem:skprf} (mono.) or~\cite[Cor.~1]{DonFehrHuang22} \\
Source & QKD/PSK & inf.-th. &
$\eps_{\mathsf{qkd}}$ / $\Hinf(k_3)\ge\beta$ + independence \\
\Ascon-128a & AEAD & standard &
$\eps_{\mathsf{aead}}$ \\
\bottomrule
\end{tabular}
\end{table}

\section{System and Threat Model}\label{sec:threat}

\paragraph{Deployment:}
\HDKEM runs between a server $S$ and a client $C$. We use these terms rather than
initiator/responder because the party sending the first ephemeral material is determined by which endpoint must expose its ML-KEM public key before encapsulation, as in KEMTLS~\cite{Schwabe21};
roles are therefore deployment-specific rather than tied to connection direction. Endpoints are provisioned with long-term ring keypairs and the pairwise ring
$R=\{\rvk_S,\rvk_C\}$. They may additionally access either a pairwise pre-shared source or a pairwise QKD key-delivery service supplying $(\kid_3,k_3)$. Ephemeral secrets are erased after each session; we assume no tamper-resistant hardware.

\paragraph{Source constraint:}
The source must be pairwise: a group-wide secret known to every endpoint would let one member derive another pair's session keys. In QKD mode the service returns a fresh public key identifier $\kid_3$ with each secret $k_3$; the identifier is not secret and is authenticated as part of $\tau$. Deployments may instead use pair-specific pre-shared secrets or a trusted provisioning service whose master key is \emph{not} exposed to endpoints. Open-enrollment deployments without such provisioning use source-less mode.

\paragraph{Adversary timelines:}
We distinguish two times. At handshake time $t$, the \emph{authentication-time adversary} controls
the network and may be QROM-capable, but has no large-scale quantum computer; ring-signature
unforgeability must hold at $t$. At $t+\Delta$, the
\emph{confidentiality/deniability-time adversary} may be fully quantum-capable. This distinction matters for authentication: under pure harvest-now-decrypt-later, a classical signature could suffice because a later forgery cannot alter a completed handshake. \HDKEM nevertheless uses a post-quantum ring signature because its $(\mu,\delta)$-deniability must remain meaningful at
$t+\Delta$; unlike past unforgeability, deniability could otherwise be broken retrospectively.

\paragraph{Network adversary:}
The adversary $\mathcal{A}$ fully controls the classical channel, is QPT with superposition random-oracle access, and may adaptively reveal session keys, ephemeral state, and long-term keys
subject to the freshness condition below. It may also mount harvest-now-decrypt-later attacks. Access to the source channel requires physical compromise of an endpoint. We treat deniability
adversaries separately in \S\ref{sec:denmodel}.

\subsection{eCK Model and Freshness}\label{sec:eckcases}

We use the extended Canetti--Krawczyk (eCK) model~\cite{LaMacchia07}, with three adaptations. First, the Test query need not be final~\cite{Cremers12}. Second, the adversary is QPT, with QROM
losses accounted for through One-Way-to-Hiding. Third, we add the key-ID-aware pairwise source interface $\mathcal{O}_{\mathsf{Src.Get}}/\mathcal{O}_{\mathsf{Src.GetWithID}}$. For QKD this is instantiated by the CK$^+$ QKD-oracle interface of H\"ovelmanns et al.~\cite{Hovelmanns25}, including QKD-key leakage and replacement queries. The remaining oracles are the standard $\mathcal{O}_{\mathsf{Send}}$, $\mathcal{O}_{\mathsf{Reveal}}$, $\mathcal{O}_{\mathsf{RevealEph}}$, $\mathcal{O}_{\mathsf{Corrupt}}$, and $\mathcal{O}_{\mathsf{RO}}$ oracles.

A session is \emph{fresh} if the adversary has revealed neither its session key nor its partner's,
has not revealed or adversarially replaced the source value referenced by its $\kid_3$, and has not obtained both a party's long-term key and that party's ephemeral state. In QKD mode freshness also requires the matching session to reference the same $\kid_3$; this is the key-ID condition needed to exclude dependent-key attacks.

\paragraph{Freshness cases:}
The eCK freshness condition yields four relevant corruption patterns for participants $A$ and $B$.
Table~\ref{tab:eckcases} shows which \HDKEM confidentiality slots remain hidden in each case.

\begin{table}[t]
\centering
\caption{eCK corruption patterns and surviving \HDKEM confidentiality slots.}
\label{tab:eckcases}
\footnotesize
\begin{tabular}{@{}lccccl@{}}
\toprule
Case & $A$ l-t & $A$ eph. & $B$ l-t & $B$ eph. & Surviving \\
\midrule
C1 & rev. & --- & rev. & --- & $k_1,k_2,k_3$ \\
C2 & rev. & --- & --- & rev. & $k_3$ \\
C3 & --- & rev. & rev. & --- & $k_3$ \\
C4 & --- & rev. & --- & rev. & $k_3$ \\
\bottomrule
\end{tabular}
\end{table}

\paragraph{Why C2--C4 rely on the source:}
All ML-KEM and X25519 secrets are ephemeral, so revealing either endpoint's ephemeral state,
together with the public transcript, recovers both $k_1$ and $k_2$. Hence C2--C4 leave only $k_3$,
while C1 preserves all three slots. This is the cost of a fully ephemeral computational core;
Table~\ref{tab:staticcmp} compares designs retaining a long-term KEM/DH secret.

\paragraph{Out of scope:}
We exclude implementation side channels, fault injection, physical-layer QKD attacks, traffic
analysis, denial of service, and deniability of being provisioned for \HDKEM. Our target is
deniability of participation in a \emph{specific session}, consistent with
prior work~\cite{Katsumata25,Brendel22,Collins24}.

\section{Allocating Authentication Capability}\label{sec:framework}

This section formalizes where deniability places authentication capability.
We define authentication and simulation sets, identify three possible
allocations, and prove an exhaustiveness result for the fixed-ring
single-signature class. 
 The argument depends only on comparing real and simulated transcript distributions; Corollaries~\ref{cor:classical}--\ref{cor:relaxed}
instantiate different comparison regimes. It is otherwise independent of \HDKEM, which
\S\ref{sec:frameworksays} places within the framework.

\subsection{Authentication and Simulation Sets}\label{sec:twosets}

\begin{definition}[Authentication set]\label{def:authset}
$\Auth(\Pi)$ is the set of principals that can, with non-negligible probability, produce a transcript
that an honest verifier accepts as originating from a legitimate endpoint.
\end{definition}

\begin{definition}[Simulation set]\label{def:simset}
$\Simm_P(\Pi,\mu,\delta)$ is the set of principals $Q$ for which an efficient algorithm, runnable by
$Q$ using only its own secrets and public data, produces transcripts related by
Definition~\ref{def:hs} to those genuinely produced by $P$.
\end{definition}

\begin{definition}[Allocation]\label{def:alloc}
An \emph{allocation} for $\Pi$ and $P$ consists of
$(\Auth(\Pi),\Simm_P(\Pi,\mu,\delta))$ together with the secret material held by their members.
\end{definition}

\begin{proposition}[Containment]\label{prop:containment}
For every $\Pi$ and $P$, polynomial $\mu$, and negligible $\delta$,
\[
  \Simm_P(\Pi,\mu,\delta)\subseteq\Auth(\Pi).
\]
\end{proposition}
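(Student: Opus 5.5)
The plan is to argue by contraposition through a single acceptance distinguisher. Fix $\Pi$, $P$, polynomial $\mu$ and negligible $\delta$, and take $Q\in\Simm_P(\Pi,\mu,\delta)$. By Definition~\ref{def:simset}, $Q$ runs an efficient algorithm $\mathsf{S}_Q$, using only $Q$'s own secrets and public data, whose output $Y$ satisfies the relation of Definition~\ref{def:hs} against the real transcripts $X$ produced by $P$. The goal is to show that $Q$ can produce, with non-negligible probability, a transcript that an honest verifier accepts as coming from a legitimate endpoint. That is exactly membership in $\Auth(\Pi)$ (Definition~\ref{def:authset}).

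\textbf{The distinguisher.} Let $\mathcal{D}_{\mathsf{acc}}$ be the efficient algorithm that runs the honest verifier's acceptance check on its input transcript. It outputs $0$ if the transcript is accepted as originating from a legitimate endpoint and $1$ otherwise. This is efficient because verification uses only public data: the ring $R$, the verification keys, and the transcript itself.

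\textbf{Real transcripts are accepted.} By correctness of $\Pi$, genuine transcripts of $P$ are accepted with probability $1-\mathrm{negl}$. So $\Pr[\mathcal{D}_{\mathsf{acc}}(X)=0]\ge 1-\mathrm{negl}(\lambda)$.

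\textbf{Transfer to simulated transcripts.} The relation is written $X\preceq_{\mu,\delta}Y$ in the orientation where the real distribution is bounded by the simulated one. The text after Definition~\ref{def:hs} notes that the quantification over all $\mathcal{D}$ covers both orientations, so this orientation is available. Applying it to $\mathcal{D}_{\mathsf{acc}}$ gives
\[
1-\mathrm{negl}(\lambda)\le \Pr[\mathcal{D}_{\mathsf{acc}}(X)=0]\le \mu\Pr[\mathcal{D}_{\mathsf{acc}}(Y)=0]+\delta .
\]
Rearranging,
\[
\Pr[\mathcal{D}_{\mathsf{acc}}(Y)=0]\ge (1-\mathrm{negl}-\delta)/\mu .
\]
This is non-negligible because $\mu$ is polynomial and $\delta$ is negligible. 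Hence running $\mathsf{S}_Q$ is itself a procedure, executable by $Q$ from its own secrets and public data, that outputs verifier-accepted transcripts with non-negligible probability. Therefore $Q\in\Auth(\Pi)$.

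\textbf{Main obstacle.} The delicate point is the orientation of the $(\mu,\delta)$ relation and the definitional match between Definition~\ref{def:simset} and Definition~\ref{def:authset}. If the simulation relation were only stated as $Y\preceq_{\mu,\delta}X$, then the direction bounding $\Pr[\mathcal{D}(X)=0]$ by $Y$ would have to come from the complemented distinguisher $1-\mathcal{D}_{\mathsf{acc}}$. That route yields only
\[
\Pr[\text{reject on }Y]\le \mu\Pr[\text{reject on }X]+\delta ,
\]
and this still forces $Y$ to be accepted with probability $1-\mathrm{negl}$. The argument would succeed in either orientation, and I would handle both explicitly to be safe.

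A subtler issue is that $\Auth$ speaks of producing transcripts a verifier accepts "as originating from a legitimate endpoint." One must check that $\mathsf{S}_Q$'s outputs count as produced by $Q$ in the sense of Definition~\ref{def:authset}, even though $Q$ may never have interacted with a live verifier. The proof should treat $\Auth$ as a capability to produce accepting transcripts, not as a requirement of online interaction, and state this reading when invoking Definition~\ref{def:authset}.
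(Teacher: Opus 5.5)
Your proposal is correct and follows the paper's proof: the paper also takes the honest acceptance predicate as the distinguisher, applies Definition~\ref{def:hs} with the real distribution bounded by the simulated one, and rearranges to obtain acceptance probability at least $(1-\mathsf{negl}-\delta)/\mu$ on simulated transcripts. Your extra treatment of the reverse orientation, via the complemented distinguisher, is a sound addition that the paper does not spell out.
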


\begin{proof}[Proof sketch]
Use honest acceptance itself as the distinguisher in Definition~\ref{def:hs}. Any simulator whose
output is sufficiently close to honest transcripts must therefore produce accepting transcripts with
non-negligible probability. Appendix~\ref{app:fullproofs} gives the full proof.
\end{proof}

\subsection{Three Authentication Allocations}\label{sec:allocations}

Proposition~\ref{prop:containment} implies that a simulator must also possess authentication
capability. The following definitions capture three possible authentication
allocations. Theorem~\ref{thm:alloc} shows that, within the fixed-ring
class $\mathcal{F}$, only witness sharing and witness hiding remain;
varying-ring allocations arise by violating F~2.

\begin{definition}[Witness-sharing allocation]\label{def:sharing}
$\Pi$ and $P$ admit a \emph{witness-sharing} allocation with respect to $Q$ if $P$ and $Q$ hold the
same authentication secret. Deniability against $Q$ is then unconditional, but $\Auth(\Pi)$ gains an
additional principal for each such shared secret.
\end{definition}

\begin{definition}[Witness-hiding allocation]\label{def:hiding}
$\Pi$ and $P$ admit a \emph{witness-hiding} allocation with respect to $Q$ if $Q$ authenticates using
a secret distinct from $P$'s and the resulting signature distributions satisfy
Definition~\ref{def:hs}. The authentication set is unchanged; the cost is the
witness-indistinguishability gap $\epsWI$ of Definition~\ref{def:wi}.
\end{definition}

\begin{definition}[Varying-ring allocation]\label{def:varying}
$\Pi$ and $P$ admit a \emph{varying-ring} allocation with respect to $Q$ if the ring verified in
$P$'s executions differs from the ring used in $Q$'s simulations. The authentication set is unchanged,
and no $\epsWI$ bound is required because the signatures are not compared under the same statement.
\end{definition}
\begin{table*}[t]
\centering
\caption{Authentication allocations for fresh single-ring deniable authentication.}
\label{tab:allocations}
\scriptsize
\setlength{\tabcolsep}{4pt}
\begin{tabular}{@{}lllll@{}}
\toprule
Allocation &
Simulator credential &
Authentication capability &
Deniability cost &
Representative case \\
\midrule
Witness sharing &
same witness &
enlarged &
none against witness holder &
symmetric/MAC authentication \\

Witness hiding &
distinct valid witness &
unchanged &
$(\mu,\delta)$ / $\epsWI$ &
\HDKEM \\

Varying ring &
witness for different ring &
unchanged &
avoids same-statement $\epsWI$ &
\RingXKEM auxiliary member \\
\bottomrule
\end{tabular}
\end{table*}
\begin{definition}[Class $\mathcal{F}$]\label{def:class}
A protocol $\Pi$ belongs to $\mathcal{F}$ if:
\begin{description}[leftmargin=*,itemsep=1pt]
\item[F1 (single authenticator).]
Authentication consists of one ring signature $\sigma$ on $\tau$ under ring $R$, verified by
$\mathsf{RS.Verify}(R,\tau,\sigma)$.
\item[F2 (fixed ring).]
$R$ is fixed before the session, known independently of the transcript, and identical across all
accepting executions between a given pair.
\item[F3 (fresh message).]
With all but negligible probability, honest executions use a fresh $\tau$, and honest verifiers reject
previously accepted values.
\end{description}
\end{definition}

F3 is essential rather than technical: without freshness, a principal holding no key could replay an
accepted transcript and thereby enter $\Auth(\Pi)$. In \HDKEM, the ML-KEM ciphertext $c$ provides
this freshness.

\begin{theorem}[Exhaustiveness in $\mathcal{F}$]\label{thm:alloc}
Let $\Pi\in\mathcal{F}$ have ring-signature forgery advantage
$\eps_{\mathsf{sig}}$, and let $P\neq Q$ with
$Q\in\Simm_P(\Pi,\mu,\delta)$. Except for the
$\eps_{\mathsf{sig}}+\mathsf{negl}$ probability of the no-key forgery branch,
one of the following holds:
\begin{enumerate}[label=(\roman*),leftmargin=*,itemsep=1pt]
\item $\Pi$ uses a witness-sharing allocation;
\item $\Pi$ uses a witness-hiding allocation and, for an optimal witness
distinguisher oriented so $a^\star\ge b^\star$,
\[
 \epsWI=a^\star-b^\star
 \le \delta+(\mu-1)b^\star
 \le \delta+(\mu-1),
\]
where $b^\star$ is its probability of outputting $0$ on the simulator's
signing witness.
\end{enumerate}
Condition F2 excludes varying-ring allocations.
\end{theorem}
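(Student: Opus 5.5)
We follow the structure of the statement: first use Proposition~\ref{prop:containment} to show the simulator must hold some accepting witness, then split on which witness that is, and finally turn the $(\mu,\delta)$ relation into a bound on $\epsWI$. Since $Q\in\Simm_P(\Pi,\mu,\delta)$, Proposition~\ref{prop:containment} gives $Q\in\Auth(\Pi)$. The simulator $\mathcal{S}_Q$ therefore outputs, with non-negligible probability, transcripts $(\tau,\sigma)$ accepted by $\mathsf{RS.Verify}(R,\tau,\sigma)$. By F1 a single signature carries all authentication, and by F2 the verified ring $R$ is the same pairwise ring used in $P$'s real executions. This already excludes a varying-ring allocation, since no ring other than $R$ is ever accepted. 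By F3 the accepted $\tau$ is fresh, so a replayed honest signature cannot be the source of acceptance.

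We then partition on the witness $\mathcal{S}_Q$ effectively uses. \emph{Case (a):} $Q$ holds no secret key for any member of $R$. Then $\mathcal{S}_Q$ is itself a forger on a fresh message under $R$, using only public data. Its success probability is bounded by $\eps_{\mathsf{sig}}$ plus the negligible F3 collision term; this is the excluded ``no-key forgery branch.'' \emph{Case (b):} $Q$ holds $P$'s own signing secret $\rsk_P$. Then $P$ and $Q$ share an authentication secret, which is allocation (i) by Definition~\ref{def:sharing}. \emph{Case (c):} $Q$ holds a distinct valid witness, which under F2 must be $\rsk_Q$ for the other ring member. This is a witness-hiding allocation in the sense of Definition~\ref{def:hiding}. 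To make the split rigorous, we will build a reduction: if neither (b) nor (c) applies, then running $\mathcal{S}_Q$ with $Q$'s secrets, none of which is a ring secret key, yields an EUF-CMA adversary.

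For the quantitative bound in (c), we model $P$'s real output as $\Sigma_0^T$, a signature under $\rsk_P$ on a fresh $\tau\leftarrow T$. We model the simulated output as $\Sigma_1^T$, a signature under $\rsk_Q$, using Definition~\ref{def:wi} with $T$ the protocol's fresh-transcript distribution. Take an optimal witness distinguisher $\mathcal{D}^\star$ with $a^\star=\Pr[\mathcal{D}^\star(\Sigma_0)=0]$ and $b^\star=\Pr[\mathcal{D}^\star(\Sigma_1)=0]$. If $a^\star<b^\star$, replace $\mathcal{D}^\star$ by its complement, as noted after Definition~\ref{def:hs}; this ensures $a^\star\ge b^\star$ and $\epsWI=a^\star-b^\star$. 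Applying Definition~\ref{def:hs} with real $\preceq_{\mu,\delta}$ simulated gives $a^\star\le\mu b^\star+\delta$. Rearranging yields $\epsWI\le\delta+(\mu-1)b^\star$, and $b^\star\le1$ gives the final inequality.

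The main obstacle is justifying that the simulator's output in case (c) is distributed exactly as $\Sigma_1^T$. The simulator might produce the transcript part $\tau$ differently from honest $P$, or use $\rsk_Q$ in some nonstandard way. We would handle this by a hybrid argument. First, F3 together with signer-independence of $T$ means the simulator can sample $\tau$ from the honest transcript distribution, since the ephemeral values are public-coin or self-generated. Any remaining deviation in how $\sigma$ is produced from $\rsk_Q$ is then absorbed into the distinguisher, so $\mathcal{D}^\star$ is compared against the canonical $\mathsf{RS.Sign}(\rsk_Q,\tau,R)$. If this identification fails, the fallback is to state the bound for the simulator's actual signing distribution, which is what $b^\star$ refers to in the statement.
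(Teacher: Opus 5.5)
Your proposal is correct and follows the paper's proof essentially step for step. Containment gives $Q\in\Auth(\Pi)$; an F3-based EUF-CMA reduction confines the no-key branch to $\eps_{\mathsf{sig}}+\mathsf{negl}$; holding $\rsk_P$ gives witness sharing; and otherwise a complement-oriented optimal witness distinguisher fed into Definition~\ref{def:hs} yields $a^\star\le\mu b^\star+\delta$.

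The paper simply assumes that the simulator's $(\tau,\sigma)$ is distributed as $\Sigma_1^T$, the step you flag as the main obstacle; this is effectively built into Definitions~\ref{def:wi} and~\ref{def:hiding}. Your fallback is therefore exactly what the paper uses. The ``absorbed into the distinguisher'' hybrid would not work as stated, and it is not needed.
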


\begin{proof}[Proof sketch]
Proposition~\ref{prop:containment} gives $Q\in\Auth$. By F3, a simulator
without a ring secret key yields a fresh forgery except with
$\eps_{\mathsf{sig}}+\mathsf{negl}$. If $Q$ holds $P$'s key, (i) holds.
Otherwise, take an optimal distinguisher for Definition~\ref{def:wi}; a
transcript distinguisher simply extracts $(\tau,\sigma)$ and invokes it, so
this witness-to-transcript step is lossless. Writing its real/simulated
probabilities as $a^\star,b^\star$, Definition~\ref{def:hs} gives
$a^\star\le\mu b^\star+\delta$ and hence (ii). Thus
$\eps_{\mathsf{sig}}$ is paid only to exclude the no-key branch, not in the
witness-hiding inequality itself. Appendix~\ref{app:fullproofs} gives the
full reduction.
\end{proof}

\begin{remark}[Tightness of the allocation bound]\label{rem:alloctight}
The first inequality in Theorem~\ref{thm:alloc}(ii) is the exact loss of the
black-box reduction; replacing $b^\star$ by $1$ yields the simpler bound
$\delta+(\mu-1)\ge\epsWI$. Without additional information on
$b^\star$, this relaxation cannot be improved generically. Structure of a
particular AOS/RPSF instantiation could tighten the concrete bound only by
supplying a stronger statement about the two signing distributions (for
example, a bound on $b^\star$ or a direct divergence bound); the allocation
theorem itself assumes no such primitive-specific structure.
\end{remark}

\subsection{Consequences for Deniability Parameters}\label{sec:regimes}

Theorem~\ref{thm:alloc} constrains the additive and multiplicative deniability gaps jointly.

\begin{corollary}[Classical indistinguishability]\label{cor:classical}
Let $\Pi\in\mathcal{F}$ and require classical deniability, i.e., $\mu=1$. Any witness-hiding
allocation then satisfies
\[
  \delta\ge\epsWI .
\]
Hence a scheme whose two signing distributions are distinguishable with advantage $\epsWI$ cannot
achieve a smaller deniability advantage.
\end{corollary}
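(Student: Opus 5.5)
This corollary follows by specializing Theorem~\ref{thm:alloc} to $\mu=1$. I will take a protocol $\Pi\in\mathcal{F}$ and a simulating principal $Q\in\Simm_P(\Pi,1,\delta)$ whose allocation is witness hiding. Under this allocation, $Q$ signs with a valid ring secret key distinct from $P$'s. Theorem~\ref{thm:alloc} applies directly, with the same $\eps_{\mathsf{sig}}+\mathsf{negl}$ exclusion of the no-key branch. That branch is already ruled out by the hypothesis that we are in case (ii), since a witness-hiding simulator holds a witness.

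From case (ii), I will use the sharper form of the inequality,
\[
  \epsWI = a^\star-b^\star \le \delta+(\mu-1)b^\star ,
\]
where $(a^\star,b^\star)$ are the acceptance-of-$0$ probabilities of an optimal witness distinguisher, oriented so that $a^\star\ge b^\star$. Setting $\mu=1$ makes the multiplicative term $(\mu-1)b^\star$ vanish regardless of $b^\star\in[0,1]$, so $\epsWI\le\delta$ with no loss.

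For the contrapositive phrasing in the statement, suppose the two signing distributions $\Sigma_0^T,\Sigma_1^T$ of Definition~\ref{def:wi} are distinguishable with advantage $\epsWI$. Then the transcript distinguisher built in the proof of Theorem~\ref{thm:alloc}, which extracts $(\tau,\sigma)$ and runs the witness distinguisher, achieves a gap of exactly $\epsWI$ between real and simulated transcripts. This is the lossless witness-to-transcript step. Definition~\ref{def:hs} with $\mu=1$ then forces $\delta\ge\epsWI$, so no smaller deniability advantage is attainable.

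The only point needing care is orientation. Definition~\ref{def:hs} is one-sided, so I must check that the optimal distinguisher can be flipped so that the real side has the larger probability of outputting $0$. The remark after Definition~\ref{def:hs} supplies this: the definition quantifies over all $\mathcal{D}$, including complements, so it constrains both orientations. I also need $T$ to be the fresh transcript distribution induced by $\Pi$, as Definition~\ref{def:wi} stipulates, so that $\epsWI$ is measured on the same distribution the deniability relation compares. Beyond these checks, the argument is pure substitution.
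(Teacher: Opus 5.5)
Your proposal is correct and matches the paper's argument, which simply sets $\mu=1$ in Theorem~\ref{thm:alloc}(ii) so that the $(\mu-1)b^\star$ term vanishes and $\epsWI\le\delta$ follows. Your extra checks (orienting the distinguisher via the complement remark after Definition~\ref{def:hs}, and the lossless witness-to-transcript step) only spell out what the paper's proof of Theorem~\ref{thm:alloc} already uses.
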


\begin{proof}[Proof sketch]
Setting $\mu=1$ in Theorem~\ref{thm:alloc}(ii) gives
$\delta\ge\epsWI$. Thus any non-negligible distinguishing gap between
the two signing-witness distributions imposes the same lower bound on
the additive deniability parameter.
\end{proof}

\begin{corollary}[Two-parameter comparison]\label{cor:relaxed}
Let $\Pi\in\mathcal{F}$ use witness hiding and $\mu>1$. Necessarily
$\mu-1\ge\epsWI-\delta$; satisfying this inequality removes the generic
allocation obstruction but does not by itself prove deniability.
\end{corollary}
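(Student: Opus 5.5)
The statement to prove is Corollary~\ref{cor:relaxed}. The plan is to read it directly off Theorem~\ref{thm:alloc}(ii), exactly as Corollary~\ref{cor:classical} was read off in the case $\mu=1$.

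First, I would fix $\Pi\in\mathcal{F}$ together with a principal $Q\in\Simm_P(\Pi,\mu,\delta)$ whose allocation is witness hiding, with $\mu>1$. Since the hypothesis places us in the witness-hiding branch, the witness-sharing alternative (i) of Theorem~\ref{thm:alloc} is not the case at hand. The no-key forgery branch has already been discounted by the $\eps_{\mathsf{sig}}+\mathsf{negl}$ term in that theorem.

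Next, I would apply Theorem~\ref{thm:alloc}(ii) and obtain
\[
\epsWI=a^\star-b^\star\le\delta+(\mu-1)b^\star\le\delta+(\mu-1),
\]
using only that $b^\star\le 1$ and $\mu-1>0$. Rearranging gives $\mu-1\ge\epsWI-\delta$, which is the claimed necessary condition.

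Finally, I would justify the qualifier that this condition is not sufficient. The key observation is that the inequality was derived purely as a consequence of membership in $\Simm_P$; nothing in the argument runs in the converse direction. To make this concrete, I would point out that Definition~\ref{def:hs} quantifies over every efficient distinguisher, not only the witness distinguisher used in the reduction. A transcript can therefore satisfy $\mu-1\ge\epsWI-\delta$ and still fail deniability. This happens, for instance, when some other transcript component, such as the KEM ciphertext distribution or the source handle $\kid_3$, is not simulatable from $Q$'s own secrets, as Definition~\ref{def:simset} requires. Removing the generic allocation obstruction therefore leaves a separate simulator construction to be supplied for each protocol; for \HDKEM that construction is given later.

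The only delicate point is this last remark. I would keep it informal rather than exhibit a formal counterexample, since the corollary claims only that the inequality is necessary, not that it is sufficient.
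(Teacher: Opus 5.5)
Your proposal is correct and follows the paper's argument: the paper also proves the corollary as an immediate rearrangement of Theorem~\ref{thm:alloc}(ii), $\epsWI\le\delta+(\mu-1)b^\star\le\delta+(\mu-1)$. Beyond that rearrangement, the paper's sketch only gives the \FalconRS illustration $(\mu,\delta)=(1+2^{-27},2^{-57})$ and does not argue the non-sufficiency qualifier, so your informal remark on that point goes beyond the paper rather than filling a gap in your proof.
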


\begin{proof}[Proof sketch]
Immediate from Theorem~\ref{thm:alloc}(ii). For \FalconRS,
$(\mu,\delta)=(1+2^{-27},2^{-57})$, so the dominant $2^{-27}$ term is carried
multiplicatively rather than additively; Corollary~\ref{cor:multi} records
how this composes across sessions.
\end{proof}

\begin{corollary}[Avoiding both costs requires a varying ring]\label{cor:relgate}
Suppose $Q\in\Simm_P$ with
\[
  \delta+(\mu-1)<\epsWI
\]
and $\Auth(\Pi)$ contains no principal beyond the legitimate endpoints. Then
$\Pi\notin\mathcal{F}$. If F1 and F3 still hold, F2 must fail, yielding a varying-ring allocation.
\end{corollary}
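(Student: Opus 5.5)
The plan is to argue by contraposition against Theorem~\ref{thm:alloc}. Suppose, for contradiction, that $\Pi\in\mathcal{F}$, that $Q\in\Simm_P(\Pi,\mu,\delta)$ with $P\neq Q$, and that $\Auth(\Pi)$ contains only the legitimate endpoints. Theorem~\ref{thm:alloc} then applies. Outside the $\eps_{\mathsf{sig}}+\mathsf{negl}$ no-key forgery branch, exactly two outcomes remain: (i) witness sharing or (ii) witness hiding. I will rule out each in turn.

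\textbf{Excluding witness sharing.} For case (i), I invoke Definition~\ref{def:sharing}: under witness sharing, $Q$ holds $P$'s authentication secret. The intended reading of ``no principal beyond the legitimate endpoints'' is that no party outside the pair $\{\rvk_S,\rvk_C\}$ holds an additional authentication secret. Definition~\ref{def:sharing} states that witness sharing adds a principal to $\Auth(\Pi)$ for each shared secret, so case (i) contradicts the hypothesis.

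This is the step I expect to be the main obstacle. If $Q$ is itself the other legitimate endpoint, holding $P$'s key does not enlarge $\Auth(\Pi)$ as a set of principals. I would handle this by reading the hypothesis at the level of credentials: no endpoint holds a witness other than its own. Equivalently, the allocation of Definition~\ref{def:alloc} records secret material, and sharing duplicates it. The proof must state explicitly that this is the interpretation being used.

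\textbf{Excluding witness hiding.} For case (ii), Theorem~\ref{thm:alloc} gives
\[
\epsWI \le \delta+(\mu-1)b^\star \le \delta+(\mu-1),
\]
using $b^\star\le 1$ and $\mu\ge 1$. This directly contradicts the assumed strict inequality $\delta+(\mu-1)<\epsWI$. The residual no-key branch happens only with probability $\eps_{\mathsf{sig}}+\mathsf{negl}$. I will note that the strict inequality is understood up to this negligible slack, or equivalently that a non-negligible gap absorbs it.

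\textbf{Conclusion.} Both branches are contradictory, so $\Pi\notin\mathcal{F}$. For the final claim, assume F1 and F3 still hold. Then the only remaining way to leave $\mathcal{F}$ is for F2 to fail: the ring verified in $P$'s executions is not the fixed ring used in $Q$'s simulation. By Definition~\ref{def:varying} this is precisely a varying-ring allocation. I will close by checking that such an allocation is consistent with the hypothesis that $\Auth(\Pi)$ is unchanged, as Definition~\ref{def:varying} requires.
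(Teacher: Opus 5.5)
Your proposal is correct and follows the paper's own route: the paper gives no separate proof of this corollary and reads it off as the contrapositive of Theorem~\ref{thm:alloc}. As in your plan, the $\Auth$ hypothesis excludes case (i) via Definition~\ref{def:sharing}, with the same credential-level reading the paper uses in case (i) of that theorem's proof, and the strict inequality excludes case (ii). One small correction: the no-key branch is not a ``slack'' on the strict inequality, since the paper stresses that $\eps_{\mathsf{sig}}$ never enters the witness-hiding bound. Instead it is a case excluded outright, because $Q\in\Auth$ forces non-negligible acceptance, whereas a keyless $Q$ would be a forger succeeding with probability at most $\eps_{\mathsf{sig}}+\mathsf{negl}$.
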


\RingXKEM is the known construction that takes this route: each prekey bundle contains an
unauthenticated ring key whose secret is discarded. No additional principal therefore gains
authentication capability, while a simulator can replace that ring member with one whose secret it
knows. Other session-varying rings can likewise relax F2, but their soundness depends on the concrete ring-signature construction. Corollary~\ref{cor:relgate} establishes only that some departure from a fixed ring is necessary; HDKEM uses the fixed two-member ring.

Theorem~\ref{thm:alloc} applies only to $\mathcal{F}$ and makes no claim about interactive
authentication, trusted setup, or constructions outside the ring-signature setting. Within $\mathcal{F}$, the theorem leaves witness sharing and witness
hiding; varying-ring constructions lie outside $\mathcal{F}$ by violating
F2. Appendix~\ref{app:cells} places representative constructions across
the broader three-way taxonomy.

\subsection{Applying the Framework to \HDKEM}\label{sec:frameworksays}

In its source-backed modes, \HDKEM belongs to $\mathcal{F}$ and uses a
witness-hiding allocation. The client signs with its own long-term ring
key, while the accuser simulates with its distinct ring key; thus
$\Auth(\HDKEM)$ contains only the two endpoints, and deniability inherits
the ring signature's $(\mu,\delta)$ parameters. Source-less mode adds the
server's Step~1 signature and therefore falls outside F1.

Theorem~\ref{thm:alloc} explains two design choices. First, \HDKEM does not derive and share an
authentication key from $k_3$, even though doing so could give $\delta=0$. Such a design is
witness-sharing: it adds an impersonation path and collapses authentication and simulation onto the
same credential, in which case a symmetric authenticator provides the same allocation more cheaply.
A public-key authenticator is useful here precisely because the simulator uses a \emph{different}
secret.

Second, \HDKEM includes no unauthenticated ring member and therefore does not achieve global
deniability. At fixed $\Auth$, Corollary~\ref{cor:relgate} shows that this requires a varying ring;
Proposition~\ref{prop:noglobal} proves the corresponding impossibility for \HDKEM directly.

\section{A Deniability Model for Pairwise DAKE}\label{sec:denmodel}

We specialize the deniability framework of Katsumata et al.~\cite{Katsumata25} from asynchronous
bundled protocols to interactive pairwise DAKE.

\subsection{Entities}\label{sec:entities}

We distinguish four entities. \emph{Accused users} are honest parties
$\mathcal{H}$ that may wish to deny participation in either endpoint role.
\emph{Insider accusers} are corrupted parties $\mathcal{C}$ that executed sessions with an accused
party: an \emph{honest-but-curious} accuser follows the protocol but retains all state, whereas a
\emph{malicious} accuser may deviate arbitrarily. An \emph{outsider accuser} participated in no
session but attempts to establish that two honest parties communicated. Finally, the
\emph{distinguisher} $\mathcal{D}$ renders the verdict, following~\cite{Katsumata25}.

\subsection{Leakage and Disclosure}\label{sec:leakdisc}

\begin{definition}[Leakage and disclosure]\label{def:leak}
Let $\mathsf{leak},\mathsf{disc}\in\{\mathsf{low},\mathsf{med},\mathsf{high}\}$.
For an accused user $u\in\mathcal{H}$ with ring secret $\rsk_u$, pairwise source secrets
$k_3^{(u,\cdot)}$, and session state $\mathsf{st}_u$, the leakage function reveals:
nothing at $\mathsf{low}$; $\rsk_u$ at $\mathsf{med}$; and
$(\rsk_u,k_3^{(u,\cdot)},\mathsf{st}_u)$ at $\mathsf{high}$.

For an accuser $v\in\mathcal{C}$, the disclosure function reveals:
$\rsk_v$ at $\mathsf{low}$; additionally $v$'s source secrets and current session state at
$\mathsf{med}$; and additionally all consumed randomness at $\mathsf{high}$.
\end{definition}

\subsection{Scopes and Metric}\label{sec:scopes}

\emph{Local deniability} allows an accused party to deny a session with the accuser.
\emph{Global deniability} additionally allows two accused parties to deny the same session
simultaneously, thereby covering outsider accusers.

\begin{definition}[$(\mu,\delta)$-deniability]\label{def:mudelta}
Let
$\mathsf{Game}^{\mathsf{ATK}}_{\mathcal{D},\mathcal{H},\Leak,\Disc}
(1^\lambda,\mathsf{mode})$,
for $\mathsf{ATK}\in\{\mathsf{local},\mathsf{global}\}$ and
$\mathsf{mode}\in\{\real,\sm\}$, denote Algorithm~\ref{alg:dengame}
(Appendix~\ref{app:dengame}). Parties are keyed and pairwise sources provisioned.
The distinguisher $\mathcal{D}$ interacts with an oracle that either executes the real protocol or,
in $\sm$-mode, invokes a simulator $\Simm$ for the relevant party. It then receives the prescribed
leakage and disclosure and outputs a verdict.

A protocol is $(\mu,\delta)$-$\mathsf{ATK}$-deniable with respect to $\Leak$ and $\Disc$ if there
exists a simulator $\Simm$, runnable by the accuser using only its own secrets and public data, such
that the real and simulated games satisfy Definition~\ref{def:hs}. That is, for every efficient

\[
\begin{aligned}
\Pr[\mathsf{Game}^{\mathsf{ATK}}(1^\lambda,\real)=0]
&\le \mu(\lambda,Q)\,
\Pr[\mathsf{Game}^{\mathsf{ATK}}(1^\lambda,\sm) \\
&\quad =0] + \delta(\lambda).
\end{aligned}
\]

where $Q$ bounds $\mathcal{D}$'s queries.

\emph{Strong deniability} replaces the real-world accuser by an arbitrary efficient adversary and
allows the simulator to depend on that adversary.
\end{definition}

\begin{remark}[Simulator inputs]\label{rem:simargs}
Algorithm~\ref{alg:dengame} gives the simulator the \emph{accuser's} ring secret and pairwise source
secret, because the accuser itself must be able to run the simulation. This implementability
requirement is precisely Pass's constraint~\cite{Pass03}. A simulator requiring the accused party's
secret key would therefore not satisfy Definition~\ref{def:mudelta}, regardless of how closely its
output matches the real distribution.
\end{remark}

\section{The \HDKEM Protocol}\label{sec:proto}

\subsection{Ring-Signature Instantiation}\label{sec:rsinst}
\HDKEM accepts any EUF-CMA-secure, $(\mu,\delta)$-deniable ring signature satisfying the abstract interface used by Theorems~\ref{thm:eck} and~\ref{thm:deny}. We use two concrete instantiations in the evaluation.

\paragraph{\FalconRS:}
We instantiate the two-member \FalconRS construction of Katsumata et al.~\cite{Katsumata25}, using Falcon-512 parameters, an 897-byte verification key, and a 1288-byte two-member ring
signature.

\paragraph{\Gandalf:}
Our second instantiation follows the \Gandalf signing and verification
construction of Gajland et al.~\cite{Gajland24}, using its published
896-byte verification key and 1236-byte two-member signature format.
The implementation-specific choices used in our measurements are
described in \S\ref{sec:setup}.

At the primitive level, published \Gandalf has a QROM unforgeability reduction~\cite{Beckmann26}; the concrete deniability parameters used here remain classical-ROM (Remark~\ref{rem:quantumdeny}). Ring-signature and split-key-PRF hashes use distinct SHAKE256 domain tags.

\subsection{Protocol Specification}\label{sec:protospec}
 
HDKEM runs between a server $S$ and a client $C$, each holding a
long-term ring keypair and the pairwise ring $R = \{\rvk_S, \rvk_C\}$
. All KEM and NIKE material is generated
per session and erased after key derivation. The handshake consists
of three sub-protocols run in order: the responder hello
(Algorithm~\ref{alg:hello}, run by $S$), the initiator handshake
(Algorithm~\ref{alg:handshake}, run by $C$), and responder
verification and key confirmation (Algorithm~\ref{alg:verify}, run
by $S$). Figure~\ref{fig:flow} gives the corresponding message-flow
diagram; the algorithms below give the same protocol at the level of
individual operations, in the style of~\cite{Katsumata25}.
 
Throughout, $\langle \cdot \rangle$ denotes the injective length-prefixed
encoding of \S\ref{sec:skprfinst}, and \textbf{require} aborts the
executing algorithm (returning $\bot$) if the guarded condition is
false.
 \paragraph{Canonical transcript encoding:}
We define the canonical transcript encoding as
\[
\begin{aligned}
\mathsf{Tr}(&pk_1^S,c,pk_2^C,pk_2^S,
             \ID_C,\ID_S,\mathsf{mode},\kid_3) \\
&=
pk_1^S\|c\|pk_2^C\|pk_2^S \\
&\quad\|\langle\ID_C\rangle
\|\langle\ID_S\rangle \\
&\quad\|\mathsf{mode}
\|\langle\kid_3\rangle .
\end{aligned}
\]
The same byte string is used for signing, verification, key derivation, and key confirmation.
\paragraph{Setup:} At commissioning, each endpoint generates a long-term
ring keypair, both install $R=\{\rvk_S,\rvk_C\}$ out of band, and deployments
optionally provision the pairwise source service. These values are static;
all KEM/NIKE material below is per-session.
 
\begin{algorithm}[t]
\caption{Responder hello. Step~1 is signed only in source-less mode.}
\label{alg:hello}
\begin{algorithmic}[1]
\Function{HDKEM.Hello}{$\rsk_S,R,\mathsf{mode},\ID_S$}
  \State $(pk_1^S,sk_1^S)\gets\mathsf{MLKEM.Gen}()$;
         $(pk_2^S,sk_2^S)\gets\mathsf{X25519.Gen}()$
  \If{$\mathsf{mode}=\textsc{source-less}$}
    \State $\mathsf{hdr}\gets
    pk_1^S\|pk_2^S\|\langle\ID_S\rangle\|\mathsf{mode}$
    \State $\sigma_S\gets\mathsf{RS.Sign}(\rsk_S,\mathsf{hdr},R)$
  \Else
    \State $\sigma_S\gets\bot$
  \EndIf
  \State $\mathsf{st}_S\gets(pk_1^S,sk_1^S,pk_2^S,sk_2^S)$
  \State \Return
  $(\rho_1:=(pk_1^S,pk_2^S,\ID_S,\mathsf{mode},\sigma_S),
    \mathsf{st}_S)$
\EndFunction
\end{algorithmic}
\end{algorithm}

\begin{algorithm}[t]
\caption{Initiator handshake; source-backed modes are
$\{\textsc{src},\textsc{hide}\}$.}
\label{alg:handshake}
\begin{algorithmic}[1]
\Function{HDKEM.Handshake}{$\rsk_C,R,\ID_C,\rho_1$}
  \State Parse $(pk_1^S,pk_2^S,\ID_S,\mathsf{mode},\sigma_S)\gets\rho_1$
  \If{$\mathsf{mode}=\textsc{source-less}$}
    \State $\mathsf{hdr}\gets
    pk_1^S\|pk_2^S\|\langle\ID_S\rangle\|\mathsf{mode}$
    \State \textbf{require}
    $\mathsf{RS.Verify}(R,\mathsf{hdr},\sigma_S)=1$
  \EndIf

  \State $(pk_2^C,sk_2^C)\gets\mathsf{X25519.Gen}()$;
         $(c,k_1)\gets\mathsf{MLKEM.Encaps}(pk_1^S)$
  \State $k_2\gets\mathsf{DH}(sk_2^C,pk_2^S)$

  \If{$\mathsf{mode}\in\{\textsc{src},\textsc{hide}\}$}
    \State $(\kid_3,k_3)\gets\mathsf{Src.Get}(\ID_C,\ID_S)$
  \Else
    \State $(\kid_3,k_3)\gets(\bot,\bot)$
  \EndIf

  \State $\tau\gets
  \mathsf{Tr}(pk_1^S,c,pk_2^C,pk_2^S,
              \ID_C,\ID_S,\mathsf{mode},\kid_3)$

  \If{$\mathsf{mode}\in\{\textsc{src},\textsc{hide}\}$}
    \State $(\ksh,\kconf[,k_{\mathsf{enc}}])
    \gets\skPRF(k_1,k_2,k_3;\tau)$
    \Comment{$k_{\mathsf{enc}}$ iff \textsc{hide}}
  \Else
    \State $(\ksh,\kconf)\gets\skPRF(k_1,k_2;\tau)$
  \EndIf

  \State $\sigma_C\gets\mathsf{RS.Sign}(\rsk_C,\tau,R)$
  \If{$\mathsf{mode}=\textsc{hide}$}
    \State $\nu_\sigma\gets$ fresh nonce;
    $\mathsf{auth}_C\gets
    (\nu_\sigma,\mathsf{AEAD.Enc}
    (K_{\mathsf{enc}},\nu_\sigma,\sigma_C))$
  \Else
    \State $\mathsf{auth}_C\gets\sigma_C$
  \EndIf

  \State \Return
  $(\rho_2:=(c,pk_2^C,\ID_C,\kid_3,\mathsf{auth}_C),
    \ksh,\kconf,\tau)$
\EndFunction
\end{algorithmic}
\end{algorithm}

\begin{algorithm}[t]
\caption{Responder verification and key derivation.}
\label{alg:verify}
\begin{algorithmic}[1]
\Function{HDKEM.Verify}{$\mathsf{st}_S,\rho_2$}
  \State Parse $(c,pk_2^C,\ID_C',\kid_3,\mathsf{auth}_C)\gets\rho_2$;
         \textbf{require} $\ID_C'=\ID_C$
  \State $k_1\gets\mathsf{MLKEM.Decaps}(sk_1^S,c)$;
         $k_2\gets\mathsf{DH}(sk_2^S,pk_2^C)$
  \State $\tau\gets
  \mathsf{Tr}(pk_1^S,c,pk_2^C,pk_2^S,
              \ID_C,\ID_S,\mathsf{mode},\kid_3)$

  \If{$\mathsf{mode}\in\{\textsc{src},\textsc{hide}\}$}
    \State $k_3\gets
    \mathsf{Src.GetWithID}(\ID_S,\ID_C,\kid_3)$
    \State $(\ksh,\kconf[,k_{\mathsf{enc}}])
    \gets\skPRF(k_1,k_2,k_3;\tau)$
    \Comment{$k_{\mathsf{enc}}$ iff \textsc{hide}}
  \Else
    \State $(\ksh,\kconf)\gets\skPRF(k_1,k_2;\tau)$
  \EndIf

  \If{$\mathsf{mode}=\textsc{hide}$}
    \State Parse $(\nu_\sigma,C_\sigma)\gets\mathsf{auth}_C$
    \State $\sigma_C\gets
    \mathsf{AEAD.Dec}(K_{\mathsf{enc}},\nu_\sigma,C_\sigma)$;
    \textbf{require} $\sigma_C\neq\bot$
  \Else
    \State $\sigma_C\gets\mathsf{auth}_C$
  \EndIf

  \State \textbf{require} $\mathsf{RS.Verify}(R,\tau,\sigma_C)=1$
  \State \Return $(\ksh,\kconf,\tau)$
\EndFunction
\end{algorithmic}
\end{algorithm}

\paragraph{Key confirmation:} After verification, $S$ and $C$ exchange
Finished records $(\nu_R,\mathsf{AEAD.Enc}(K_{\mathsf{cf}},\nu_R,
\mathsf{label}_R\|\tau))$ for roles $R\in\{S,C\}$ using fresh nonces. Each
side accepts only if decryption returns the expected role label and the
identical $\tau$; otherwise it aborts without outputting $\ksh$. Thus a
source mismatch is detected before application keys are used.
 
\begin{remark}[Public source handle, secret source key]\label{rem:excl-alg}
The public source identifier $\kid_3$ is deliberately included in the signed transcript $\tau$, while the secret $k_3$ is not. A ring signature hides the signer, not the message; putting $k_3$ on the signed wire transcript would disclose the source secret. Binding $\kid_3$, however, is necessary: it identifies the exact source entry used by the session and prevents cross-session QKD-key substitution. The local-deniability simulator knows the accuser's matching pairwise source state and can therefore reconstruct the same public handle and derived keys without learning any secret of the accused party.
\end{remark}
 
\begin{remark}[Source-less fallback]\label{rem:srcless-alg}
When $\mathsf{mode} = \textsc{source-less}$, Algorithm~\ref{alg:hello}
additionally signs $\rho_1$ under $\rsk_S$, since $k_3$ is unavailable
to provide server authentication through the Finished exchange.
This is the two-signature configuration analyzed in
Remark~\ref{rem:srcless} and costs the deniability bound
$(\mu_{\mathsf{RS}}^2, \delta_{\mathsf{RS}}(1+\mu_{\mathsf{RS}}))$
rather than $(\mu_{\mathsf{RS}}, \delta_{\mathsf{RS}})$.
\end{remark}
 
\paragraph{End-to-end execution:} A complete session runs
$\textsc{HDKEM.Hello} \to \textsc{HDKEM.Handshake} \to
\textsc{HDKEM.Verify} \to \textsc{HDKEM.Finish}$, invoked twice (once
per role) for the last function. Both parties terminate holding the
same $k_{\mathsf{sh}}$ if and only if all \textbf{require} guards
pass; any failure aborts with $\bot$ and no session key is output,
matching the freshness and injective-agreement properties used in
Theorem~\ref{thm:eck}.

\paragraph{Transcript binding:}
Because Step~1 is unauthenticated in \textsc{src} mode, $\tau$ binds $pk_1^S$ itself rather than only
$c$. Replacing $pk_1^S$ therefore changes the signed transcript and is detected at signature verification. Binding both identities, the ML-KEM ciphertext, and the mode also makes each accepted transcript session-specific. This serves three purposes: it prevents replay in the $G_2$ hop of Theorem~\ref{thm:eck}; supplies F3 of Definition~\ref{def:class}, enabling
Theorem~\ref{thm:alloc}; and satisfies the context-freshness requirement of
Definition~\ref{def:skprf}, required by~\cite[Cor.~1]{DonFehrHuang22}.
Keeping the secret $k_3$ outside $\tau$ preserves source secrecy, while including the public $\kid_3$ binds the exact source instance into both authentication and key derivation. A key-ID substitution therefore changes the signed statement, and inconsistent source secrets yield different $\kconf$ values and fail during key confirmation before application keys are used.

\paragraph{Modes and an optional fourth slot:}
A deployment with a long-term device anchor $k_0$ may derive
\[
  \ksh\|\kconf
  \gets
  \skPRF(k_0,k_1,k_2,k_3;\tau),
\]
adding another OR-confidentiality slot without changing the deniability mechanism. Since
$\mathcal{O}_{\mathsf{Corrupt}}$ reveals $k_0$ together with the long-term ring keys, however, this
does not improve the C2/C3 eCK cases.

The selected $\mathsf{mode}$ is bound into both $\tau$ and $\sigma$, so downgrade requires a signature forgery. In source-less mode, $S$ additionally signs Step~1 because $k_3$ is unavailable for server authentication. Confidentiality then falls back to the two computational slots and eCK
security to C1, while deniability remains ring-signature-based but incurs the two-signature cost of Remark~\ref{rem:srcless}.

\subsection{Optional Participation Hiding and QKD Source}\label{sec:aeadmode}
\label{sec:qkd}

In \textsc{hide}, a source-backed participation-hiding mode, $\skPRF$ additionally derives $k_{\mathsf{enc}}$ and the client encrypts $\sigma_C$ under $K_{\mathsf{enc}}$, following \RingXKEM~\cite{Hashimoto25}. The honest client chooses a fresh AEAD nonce and sends it with the ciphertext. Because each session uses a fresh $K_{\mathsf{enc}}$ for one such encryption, nonce-respecting AEAD security suffices. Decryption or signature-verification failure leads to the same protocol abort. This mode hides the ring signature from outsiders; it does not strengthen deniability against the session peer, which also derives $K_{\mathsf{enc}}$. Traffic-analysis and implementation timing leakage are outside our model.

When $k_3$ is supplied by QKD, \HDKEM consumes an authenticated QKD key-delivery service rather than authenticating QKD reconciliation itself. The service exposes a public identifier $\kid_3$ and the corresponding secret through the interface of H\"ovelmanns et al.~\cite{Hovelmanns25}; $\kid_3$ is signed and included in the split-key-PRF context. Appendix~\ref{app:qkd} gives the CK$^+$ reduction. A provisioned pairwise source satisfies the same \HDKEM interface and is used for the performance experiments in \S\ref{sec:eval}.

\section{Security Analysis}\label{sec:sec}

\paragraph{Proof models:}
The claims use component-specific proof models. ML-KEM and SHAKE256 are analyzed in the QROM, while the X25519 slot relies on classical Gap-DH; against a quantum attacker, confidentiality therefore requires a surviving post-quantum or information-theoretic slot. Authentication inherits the unforgeability model of the selected ring signature. For published \Gandalf, Beckmann--Majenz~\cite{Beckmann26} give a QROM strong-unforgeability reduction. Deniability reduces to the ring signature's $(\mu,\delta)$ guarantee; the concrete parameters used in this paper remain classical-ROM (Remark~\ref{rem:quantumdeny}).
\paragraph{Formal protocol analysis:}
The artifact includes complementary CryptoVerif and ProVerif models for
protocol-level secrecy, authentication, agreement, source handling, and
key confirmation; primitive assumptions and QKD physical security remain external.

\subsection{NIKE-Slot Security}\label{sec:nike}

\begin{theorem}[NIKE-slot soundness]\label{thm:nike}
Suppose both X25519 public values are included in $\tau$ and hence bound into both the signature and
key derivation. If the ring signature is EUF-CMA-secure with advantage
$\eps_{\mathsf{sig}}$, then both parties derive the same $k_2$, and under Gap-DH,
\[
  \adv^{\mathsf{nike}}_{\HDKEM}(\mathcal{A})
  \le \eps_{\mathsf{dh}}+\eps_{\mathsf{sig}}.
\]
\end{theorem}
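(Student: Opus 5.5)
The plan is a two-step game sequence. The first hop removes transcript tampering at cost $\eps_{\mathsf{sig}}$. The second reduces secrecy of $k_2$ in an untampered session to Gap-DH at cost $\eps_{\mathsf{dh}}$. Correctness, i.e.\ that both parties derive the same $k_2$, falls out of the first hop.

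\textbf{Correctness and the signature hop.} In an honest execution, $S$ computes $\mathsf{DH}(sk_2^S,pk_2^C)$ and $C$ computes $\mathsf{DH}(sk_2^C,pk_2^S)$. These agree by X25519 correctness whenever each side uses the partner's genuine public value. The canonical encoding $\mathsf{Tr}$ is injective, since every variable-length field is length-prefixed and the others have fixed length. Hence an accepted $\tau$ determines the pair $(pk_2^C,pk_2^S)$ exactly.

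Let $G_0$ be the real NIKE-slot game and let $G_1$ abort if $S$ accepts a $\sigma_C$ on some $\tau$ that no honest client ever signed. Any difference between $G_0$ and $G_1$ yields an EUF-CMA forger $\mathcal{B}$. The forger receives $R$ and simulates every session itself, since all KEM and NIKE material is ephemeral. It answers honest client signatures through its signing oracle and outputs the first fresh accepted $(\tau,\sigma_C)$. F3 freshness, carried by $c$, guarantees the message is new. Hence $|\Pr[G_0]-\Pr[G_1]|\le\eps_{\mathsf{sig}}$.

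In $G_1$, every accepted $\tau$ was produced by an honest client. That client saw exactly the $pk_2^S$ recorded in $\tau$, and $S$ sees exactly the $pk_2^C$ recorded there. Any substitution of an X25519 value in either direction would therefore change $\tau$ and be rejected. This gives agreement on $k_2$.

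\textbf{Gap-DH hop.} In $G_1$ the adversary must distinguish $k_2=\mathsf{DH}(sk_2^C,pk_2^S)$ for the test session from random, where both shares are honestly generated and unrevealed. I would build a Gap-DH solver as follows:
\begin{itemize}
\item embed the challenge $(g^a,g^b)$ as $(pk_2^S,pk_2^C)$ in a guessed session;
\item program the random oracle (SHAKE256 in $\skPRF$) lazily, which is needed because $k_2$ is only ever used as an $\skPRF$ input;
\item use the DDH oracle to keep answers consistent whenever the adversary queries a slot value $Z$ with $\mathrm{DDH}(g^a,g^b,Z)=1$;
\item output that $Z$ as the CDH solution.
\end{itemize}
This yields the $\eps_{\mathsf{dh}}$ term. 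The session-guessing loss should be absorbed into $\eps_{\mathsf{dh}}$ as a per-session advantage, matching the stated bound.

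\textbf{Main obstacle.} The hard step is making the Gap-DH reduction consistent. The simulator must answer Reveal queries and key confirmation for sessions that share one embedded value but have adversarial partners, without knowing the discrete logs. Because the NIKE keys are per-session ephemerals, I would embed only within the single test session. Every other session then uses fresh keys known to the reduction, so only the test session needs the DDH oracle, for patching RO queries made under the test context $\tau$. Context freshness (Definition~\ref{def:skprf}) ensures no other session queries $\skPRF$ on that $\tau$, so the patch is isolated.

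The resulting bound is the sum $\eps_{\mathsf{dh}}+\eps_{\mathsf{sig}}$ as stated. This holds only classically, consistent with Table~\ref{tab:prims}.
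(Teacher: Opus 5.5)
Your proposal is correct and follows the paper's own route. The first hop $G_0\to G_1$ rules out acceptance of an unsigned $\tau$, and hence any substitution of $pk_2^C$ or $pk_2^S$, at cost $\eps_{\mathsf{sig}}$; the second embeds the Gap-DH challenge $(g^a,g^b)$ as the test session's two X25519 values and uses the decision oracle so that the test-session $k_2=g^{ab}$ becomes the reduction's output.

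Your version is more explicit than the paper's sketch: you use a genuine EUF-CMA forger with the signing oracle rather than a reducer holding both ring secrets, and you spell out lazy programming of the $\skPRF$ oracle. The session-guessing factor you fold into $\eps_{\mathsf{dh}}$ is likewise left implicit in the paper's per-session bound.
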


\begin{proof}[Proof sketch]
Substitution changes $\tau$ and therefore requires a forgery; otherwise embed Gap-DH in the two X25519 values. Appendices~\ref{app:fullproofs} and~\ref{app:nike} give the reduction.
\end{proof}

\subsection{Combiner OR-Security}\label{sec:comb}

\begin{theorem}[Robust combiner]\label{thm:comb}
Let
\[
  (\ksh,\kconf)\gets\skPRF(k_1,k_2,k_3;\tau),
\]
where $\tau$ contains $c$. For each surviving secure slot
$i\in I_{\mathsf{sec}}\subseteq\{1,2,3\}$, let
\[
\eps_1=\eps_{\mathsf{mlkem}},\qquad
\eps_2=\eps_{\mathsf{dh}},\qquad
\eps_3=\eps_{\mathsf{src}}.
\]
Then
\[
  \adv^{\mathsf{comb}}_{\HDKEM}(\mathcal{A})
  \le
  \min_{i\in I_{\mathsf{sec}}}\eps_i+\eps_{\mathsf{prf}}.
\]
\end{theorem}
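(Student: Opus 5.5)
The plan is a two-hop game sequence, run separately for each surviving slot $i\in I_{\mathsf{sec}}$; taking the best choice of $i$ at the end gives the $\min$. Fix $i$.

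Let $G_0$ be the combiner game, in which $\mathcal{A}$ receives $(\ksh,\kconf)=\skPRF(k_1,k_2,k_3;\tau)$ together with the public transcript $\tau$ and any slots exposed in the current case of Table~\ref{tab:eckcases}.

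In $G_1$ we replace $k_i$ by an independent value $\tilde k_i$ and leave everything else unchanged. This hop is justified slot by slot.
\begin{itemize}
\item For $i=1$ we embed an ML-KEM IND-CCA2 challenge $(pk_1^S,c,k^\ast)$. The challenge ciphertext $c$ is placed into $\tau$, and the reduction answers the combiner evaluations for other transcripts through the decapsulation oracle. This is legitimate because every other query has a different $c$, and hence a different $\tau$; since $\tau$ contains $c$, no second-preimage issue arises.
\item For $i=2$ we embed a Gap-DH instance in $pk_2^C,pk_2^S$. The DDH oracle is used to keep random-oracle answers consistent, so the hop costs $\eps_{\mathsf{dh}}$. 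Because both X25519 values are bound into $\tau$, Theorem~\ref{thm:nike} ensures both parties still agree on $k_2$.
\item For $i=3$ we replace the concrete source by the ideal key-delivery interface, at cost $\eps_{\mathsf{qkd}}$, or at no cost for a pre-shared source satisfying $\Hinf(k_3)\ge\beta$.
\end{itemize}
Each hop therefore costs $\eps_i$.

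In $G_2$ we replace the output by a uniform $512$-bit string. The hidden slot $\tilde k_i$ is now independent of the other slots and of $\tau$, and it meets the min-entropy requirement. The remaining slots may be known to, or even chosen by, $\mathcal{A}$, which is exactly the split-key PRF game of Definition~\ref{def:skprf}. Its freshness condition holds because each query's context $\tau$ contains a distinct ML-KEM ciphertext $c$; this is F3 in its context-freshness form. So the hop costs $\eps_{\mathsf{prf}}$, taken from Lemma~\ref{lem:skprf} or from the multi-oracle bound, per Remark~\ref{rem:twobounds}. In $G_2$ the advantage is zero, and summing the two hops gives $\eps_i+\eps_{\mathsf{prf}}$.

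The main obstacle is the $G_0\to G_1$ hop for the computational slots. The issue is that $k_1$ or $k_2$ from the challenge may have been queried to the random oracle through other sessions or through superposition queries, while honest evaluations on other transcripts must still be simulated consistently. I would handle this by programming the oracle lazily, keyed on $\tau$: every query involving the challenge $c$, or the challenge DH pair, is routed through the one embedded instance, and all other transcripts are answered via decapsulation or the DDH oracle. The QROM losses for this routing are already absorbed into the stated $\eps_{\mathsf{mlkem}}$ and $\eps_{\mathsf{prf}}$ bounds, so no additional term appears.
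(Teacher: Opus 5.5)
Your proposal is correct and follows essentially the same route as the paper's proof. For each surviving slot, one hop replaces $k_i$ by an independent value at cost $\eps_i$ (IND-CCA2 with every $c\neq c^*$ forwarded to the decapsulation oracle, a Gap-DH embedding, or source idealization at cost $\eps_{\mathsf{qkd}}$ or zero for a pre-shared source), a second hop applies split-key PRF security at cost $\eps_{\mathsf{prf}}$ using the freshness supplied by $c\in\tau$, and one minimizes over $i\in I_{\mathsf{sec}}$. Two asides are unnecessary: the appeal to Theorem~\ref{thm:nike} for agreement on $k_2$ plays no role in this passive combiner game (and would add an $\eps_{\mathsf{sig}}$ term absent from the bound if it were load-bearing), and your ``main obstacle'' largely dissolves because in the ML-KEM case the reduction can simulate $F$ itself without any programming, while the DDH-oracle-based lazy programming for the X25519 slot is needed, and claimed, only against classical adversaries.
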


\begin{proof}[Proof sketch]
Reduce to any surviving slot at cost $\eps_i$, then apply split-key PRF
security at cost $\eps_{\mathsf{prf}}$. Because $c$ is included in $\tau$, any
$c'\neq c$ induces a distinct PRF context even if both ciphertexts decapsulate to
the same $k_1$; no ciphertext second-preimage assumption is required.
Appendix~\ref{app:combproof} gives the full reduction.
\end{proof}

\begin{remark}[Meaning of slot compromise]\label{rem:compromise}
Saying that the session key tolerates compromise of any two slots means that revealing any two of
$k_1,k_2,k_3$ through $\mathcal{O}_{\mathsf{RevealEph}}$,
$\mathcal{O}_{\mathsf{Corrupt}}$, or $\mathcal{O}_{\mathsf{Src}}$, before the freshness check, still
leaves the third slot secure under its stated premise. The claim does not cover simultaneous compromise of all three.
\end{remark}

\subsection{eCK Key Indistinguishability}\label{sec:eck}

\begin{theorem}[eCK security]\label{thm:eck}
For every fresh session,
\[
  \adv^{\eck}_{\HDKEM}(\mathcal{A})
  \le
  \eps_{\mathsf{sig}}
  +\eps_{\mathsf{slot}}
  +2\eps_{\mathsf{prf}}
  +\eps_{\mathsf{aead}},
\]
where
\[
\eps_{\mathsf{slot}}
=
\begin{cases}
\min\{\eps_{\mathsf{mlkem}},\eps_{\mathsf{dh}},\eps_{\mathsf{src}}\},
    & \text{C1},\\[1mm]
\eps_{\mathsf{src}},
    & \text{C2--C4}.
\end{cases}
\]
Here $\eps_{\mathsf{sig}}$ is measured in the ring-signature unforgeability
model, $\eps_{\mathsf{prf}}$ is as fixed in Remark~\ref{rem:twobounds},
$\eps_{\mathsf{src}}$ is defined in \S\ref{sec:source}, and
$\eps_{\mathsf{dh}}$ denotes the applicable classical Gap-DH bound for X25519.
\end{theorem}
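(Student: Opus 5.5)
The proof is a standard sequence of game hops, \(G_0\) through \(G_4\). \(G_0\) is the eCK experiment with a fresh Test session. Each hop is charged to exactly one term of the bound, and in the final game the Test key is uniform, so the adversary's advantage there is zero.

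\(G_1\) (authentication): we abort if some honest responder accepts a \(\rho_2\) whose \(\tau\) was never signed by the honest client for that session. By F3, \(\tau\) is unique per session through \(c\), and it binds \(pk_1^S\), \(pk_2^C\), \(pk_2^S\), both identities, the mode and \(\kid_3\). Any such event therefore gives an EUF-CMA forgery against the fixed ring \(R\). The reduction simulates all remaining parties and oracles. It answers client signatures through the signing oracle, and it is needed only while the client's ring key is uncorrupted at acceptance time. Otherwise the eCK freshness condition already places the session in C1 or C3 via matching-partner reasoning. This costs \(\eps_{\mathsf{sig}}\). Afterwards every fresh accepting session has a matching partner with identical \(\tau\), so substitution and downgrade attacks (including \(\kid_3\) substitution) are excluded.

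\(G_2\) (slot replacement): we case-split according to Table~\ref{tab:eckcases}. In C1 no ephemeral state is revealed, so we may pick any surviving slot \(i\) and replace \(k_i\) in the Test session by an independent value of the required min-entropy, at cost \(\eps_i\).
\begin{itemize}
\item For \(i=1\), we embed an ML-KEM IND-CCA2 challenge \((pk_1^S,c)\). The decapsulation oracle answers other sessions, and because \(c\in\tau\) no second-preimage issue arises, as in Theorem~\ref{thm:comb}.
\item For \(i=2\), we embed Gap-DH in \((pk_2^S,pk_2^C)\). The DDH oracle is used to keep random-oracle answers consistent, exactly as in Theorem~\ref{thm:nike}.
\item For \(i=3\), we use the source interface.
\end{itemize}
Choosing the best slot gives the minimum. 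In C2--C4 only \(k_3\) survives. Freshness then guarantees that \(\kid_3\) is unrevealed and unreplaced and shared with the matching session, so we switch to the ideal key-delivery interface at cost \(\eps_{\mathsf{src}}\), with the QKD case via the CK\(^+\) oracle of~\cite{Hovelmanns25}.

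\(G_3\) (split-key PRF): we invoke Definition~\ref{def:skprf} with the hidden slot \(i\), letting the reduction choose all other slots, to replace \((\ksh,\kconf)\) of the Test session and its partner by uniform strings. Context freshness holds because \(\tau\) is unique per session. This costs \(\eps_{\mathsf{prf}}\). A second \(\eps_{\mathsf{prf}}\) is charged to make the hidden-slot switch consistent across the other sessions that share the same source or ephemeral values through the random oracle. It also covers, in \textsc{hide} mode, the extra squeezed \(k_{\mathsf{enc}}\) output, which is derived from the same call.

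\(G_4\) (key confirmation): the Finished records and, in \textsc{hide} mode, the encrypted \(\sigma_C\) now use uniform \(K_{\mathsf{cf}}\) (and \(K_{\mathsf{enc}}\)) that are each used for fresh-nonce encryptions. We replace them by encryptions of junk and reject any non-honest ciphertext, at cost \(\eps_{\mathsf{aead}}\). After this hop the transcript is independent of \(\ksh\), and the Test key is uniform.

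The main obstacle is \(G_2\) together with the \(G_1\) interplay in the mixed cases C2 and C3. When one party's long-term ring key is corrupted, the adversary can sign for that party. We must then argue either that the Test session still has an honest partner whose signing key is uncorrupted, which lets us apply \(G_1\) on that side, or that only \(k_3\) is needed. Source freshness with the matching \(\kid_3\) must then suffice even without authentication of the peer. I would handle this by charging \(\eps_{\mathsf{sig}}\) only against the uncorrupted side, and by relying on key confirmation plus \(\kid_3\) binding for the other direction.
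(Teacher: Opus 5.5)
\textbf{There is a genuine gap in C1.} Your \(G_2\) in case C1 replaces \(k_1\) or \(k_2\) in the Test session. That requires the peer's ephemeral values to come from an honest matching session: \(pk_1^S,pk_2^S\) for a client Test session, \(c,pk_2^C\) for a server Test session. You take this partnering from \(G_1\), but \(G_1\) cannot supply it, for two reasons.
\begin{enumerate}
\item In \textsc{src} mode only the client signs, so no signature authenticates the server hello to a client session.
\item The EUF-CMA reduction needs \emph{both} ring keys uncorrupted. \(\rsk_S\) also produces signatures that verify under \(R=\{\rvk_S,\rvk_C\}\); this is exactly what the simulator of Theorem~\ref{thm:deny} relies on. So your qualifier ``only while the client's ring key is uncorrupted'' is wrong.
\end{enumerate}
In C1 both long-term keys are revealed, so \(G_1\) is vacuous there. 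The adversary can inject its own hello, or its own \((c,pk_2^C)\) under a valid signature, and then knows \(k_1\) and \(k_2\). For the same reason, your closing plan to charge \(\eps_{\mathsf{sig}}\) ``only against the uncorrupted side'' fails in C2 and C3, since one revealed key suffices to sign for \(R\). That failure is harmless there, because those cases use only \(k_3\).

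\textbf{The missing step is the paper's \(G_2\), placed \emph{before} the slot reduction.} Because \(\kconf\) depends on \(k_3\), which freshness protects in every case, the paper aborts whenever an honest session accepts a Finished record not produced under the \(\kconf\) of an honest partner with identical \(\tau\). This costs \(\eps_{\mathsf{prf}}+\eps_{\mathsf{aead}}\). It is what gives every accepting Test session an honest partner and rules out \(\kid_3\)-substitution desynchronization, so that in C1 all three slots genuinely survive. It is also where the bound's second \(\eps_{\mathsf{prf}}\) comes from.

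In your ordering the AEAD hop comes after \(G_3\), where it adds nothing to key indistinguishability. Once the whole PRF output is idealized, \(\ksh\) is already independent of \(\kconf\) and hence of the Finished records. Your second \(\eps_{\mathsf{prf}}\) is also unmotivated: Definition~\ref{def:skprf} already covers many fresh-context queries on the same hidden slot, and \(k_{\mathsf{enc}}\) comes from the same call. Moving the confirmation hop ahead of the case split repairs the argument and recovers the paper's accounting. Your C2--C4 reasoning is otherwise fine.
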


\begin{proof}[Proof sketch]
Games eliminate signature/confirmation failures, enforce signed $\kid_3$
consistency, reduce to one surviving slot, and finally idealize the PRF output.
C1 uses the minimum of the three slot bounds; C2--C4 use
$\eps_{\mathsf{src}}$. Appendix~\ref{app:fullproofs} gives the full sequence.
\end{proof}

\paragraph{QKD instantiation:} Appendix~\ref{app:qkd} gives the CK$^+$ key-ID-aware source reduction~\cite{Hovelmanns25}; signing $\kid_3$ excludes the corresponding substitution attack.

\paragraph{Concrete bound:}
At $\qh=2^{60}$, ML-KEM contributes about $2^{-98}$. In C2--C4 only $k_3$ survives; under the multi-oracle PRF analysis, increasing the source from $\beta=256$ to $384$ improves the PRF term from about $2^{-45.5}$ to $2^{-109.5}$, restoring the ML-KEM ceiling. Remark~\ref{rem:twobounds} gives both derivations; no anonymity term enters the bound.

\subsection{Forward Secrecy}\label{sec:fs}

\begin{theorem}[Forward secrecy]\label{thm:fs}
After session completion and erasure of the ephemeral ML-KEM and X25519 secret
keys, subsequent compromise of both parties' long-term ring-signing keys, but not
the pairwise source $k_3$, gives advantage at most
\[
  \min\{\eps_{\mathsf{mlkem}},\eps_{\mathsf{dh}},\eps_{\mathsf{src}}\}
  +\eps_{\mathsf{prf}}
\]
against the completed session key.
\end{theorem}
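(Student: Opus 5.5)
The proof reduces forward secrecy to the eCK case C1 of Theorem~\ref{thm:eck} and then to the robust combiner of Theorem~\ref{thm:comb}. The first step recasts the forward-secrecy adversary as an eCK adversary. The attacker may call $\mathcal{O}_{\mathsf{Corrupt}}$ on both endpoints only after the target session has completed and erased its ephemeral state. It never calls $\mathcal{O}_{\mathsf{RevealEph}}$ on the target or its partner, and never reveals or replaces the source entry referenced by $\kid_3$. This is exactly corruption pattern C1 of Table~\ref{tab:eckcases}, and the session remains fresh in the sense of \S\ref{sec:eckcases}.

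The second step argues that the slots survive. The ring keys $\rsk_S,\rsk_C$ do not enter the $\skPRF$ input at all; they affect only the signature $\sigma_C$ and the Finished records. Hence revealing them gives no information about $k_1$, $k_2$ or $k_3$ beyond the public transcript $\tau$. Because $sk_1^S$, $sk_2^S$ and $sk_2^C$ were erased before corruption, $k_1$ stays hidden under ML-KEM IND-CCA2 and $k_2$ under Gap-DH, given only the public values $pk_1^S$, $c$, $pk_2^C$ and $pk_2^S$. The value $k_3$ stays hidden by the source premise. So $I_{\mathsf{sec}}=\{1,2,3\}$.

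The third step, which I expect to be the main obstacle, is removing the $\eps_{\mathsf{sig}}$ term that appears in Theorem~\ref{thm:eck}. In the eCK game that term pays for active attacks in which $\mathcal{A}$ substitutes transcript values during the target session. Here the long-term keys are revealed only \emph{after} completion. I therefore plan to argue that the target session's $\tau$, in particular $pk_1^S$, $c$ and both X25519 shares, was accepted while both ring keys were uncorrupted. By F3 and the binding of $\tau$, the session is then either honestly matched or it aborts. Any deviation before completion would already require a forgery under an uncorrupted key.

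There is a subtlety: a signature-forgery event during the live phase would reintroduce $\eps_{\mathsf{sig}}$. I would handle it by stating forward secrecy, as is standard, for sessions that completed with a matching partner. For such sessions the key is computed from honestly generated ephemerals, so the forgery game hop is unnecessary. If a general sender-substitution case must be covered, I would instead note that it falls under the authentication claim of Theorem~\ref{thm:eck}, and keep it out of this bound.

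The final step applies Theorem~\ref{thm:comb} to the honestly generated $(k_1,k_2,k_3;\tau)$. Choosing the weakest-assumption slot gives advantage $\min\{\eps_{\mathsf{mlkem}},\eps_{\mathsf{dh}},\eps_{\mathsf{src}}\}+\eps_{\mathsf{prf}}$. Context freshness for $\skPRF$ holds via $c$, as in Definition~\ref{def:skprf}. The AEAD confirmation term also disappears: the Finished records use $K_{\mathsf{cf}}$, which is an independent part of the idealized $\skPRF$ output, so they reveal nothing about $\ksh$ once that output is replaced by uniform.
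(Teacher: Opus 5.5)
Your proposal is correct and is essentially the paper's proof. The ring-signing keys never enter $k_1,k_2,k_3$ or the $\skPRF$ input, so ephemeral erasure leaves all three slots secure, and Theorem~\ref{thm:comb} with $I_{\mathsf{sec}}=\{1,2,3\}$ yields the bound directly; your restriction to sessions that completed with a matching partner makes explicit what the paper's one-line argument leaves implicit. One small inaccuracy, which your final argument does not rely on: in \textsc{src} mode the server hello is unsigned, so a substituted hello is rejected by the client's $k_3$-keyed Finished check, not by requiring a ring-signature forgery.
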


\begin{proof}[Proof sketch]
The long-term ring-signing keys do not enter $k_1$, $k_2$, $k_3$, or the
split-key PRF input. After ephemeral erasure, all three confidentiality slots
remain available; applying the robust-combiner reduction gives the bound.
Appendix~\ref{app:fullproofs} gives the reduction.
\end{proof}

\subsection{Deniability}\label{sec:denresults}

\begin{theorem}[Strong local deniability]\label{thm:deny}
If the ring signature is
$(\mu_{\mathsf{RS}},\delta_{\mathsf{RS}})$-deniable, then \HDKEM in \textsc{src} mode is
$(\mu_{\mathsf{RS}},\delta_{\mathsf{RS}})$-strongly locally deniable under
Definition~\ref{def:mudelta}, even for
$\mathsf{leak}=\mathsf{high}$ and $\mathsf{disc}=\mathsf{high}$ and against malicious accusers.
\end{theorem}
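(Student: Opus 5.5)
We would construct an explicit simulator and reduce the gap between real and simulated games to a single invocation of the ring signature's $(\mu_{\mathsf{RS}},\delta_{\mathsf{RS}})$-deniability (Definition~\ref{def:rsden}). We treat the two directions of accusation separately. In \textsc{src} mode only the client produces a signature. When the accused is $S$, no message of $S$ carries a signature, so $S$'s contribution can be produced perfectly by the accuser $C$. That contribution consists of $pk_1^S,pk_2^S$, decapsulation, the Finished record under $K_{\mathsf{cf}}$, and $k_3$ via $\mathsf{Src.GetWithID}$, and $C$ holds the matching pairwise source state. Hence that direction is $(1,0)$-deniable.

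The substantive case is an accused client $C$ facing a (possibly malicious) accuser $S$. The simulator $\Simm$ receives $\rsk_S$, $S$'s pairwise source secrets, and the accuser's randomness and state. It runs the accuser's code as a black box to obtain $\rho_1$ (this is where strong deniability lets $\Simm$ depend on the adversary). It then executes Algorithm~\ref{alg:handshake} honestly, except that it computes $\sigma_C\gets\mathsf{RS.Sign}(\rsk_S,\tau,R)$. Every other value is computable from public data and the accuser's secrets: $\Simm$ samples $pk_2^C$ and $c$ itself, and obtains $k_3$ and $\kid_3$ from the pairwise source it shares. Under $\mathsf{leak}=\mathsf{high}$, $\Simm$ must also output the accused's leakage $(\rsk_C,k_3,\mathsf{st}_C)$. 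The state $\mathsf{st}_C$ consists of ephemerals $\Simm$ itself generated, so it is consistent. The values $\rsk_C$ and $k_3$ are handed to the distinguisher by the game and need not be simulated; this is exactly why Definition~\ref{def:rsden} gives the distinguisher every keypair. Malicious $\rho_1$ causes no difficulty, because $\Simm$ reproduces the honest client's response to any $\rho_1$ exactly, apart from the signing key.

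We then establish the bound by a reduction $\mathcal{B}$ to ring-signature deniability. $\mathcal{B}$ receives all keypairs and a signing oracle for ring $\{\rvk_C,\rvk_S\}$ under hidden $\rsk_{u_b}$. It runs the whole deniability game, including the accuser, the source, leakage and disclosure, and answers every client signing step via the oracle. It then outputs the verdict of $\mathcal{D}$. When $b$ selects $C$, $\mathcal{B}$ reproduces $\mathsf{Game}(\real)$ exactly; when $b$ selects $S$, it reproduces $\mathsf{Game}(\sm)$ exactly. Both identities hold because the transcript $\tau$ is a function only of values $\mathcal{B}$ generates. Across $Q$ sessions, $\mathcal{B}$ makes $Q$ oracle queries, so the inequality transfers with parameters $\mu_{\mathsf{RS}}(\lambda,Q)$ and $\delta_{\mathsf{RS}}$. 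We take the convention that $\mu$ in Definition~\ref{def:rsden} is already indexed by the query count, as in \cite{Katsumata25}; otherwise Corollary~\ref{cor:multi}-style composition would be needed. No hybrid is required, since a single oracle covers every session.

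The main obstacle is ensuring that the reduction is exact, so that no additional additive terms arise. There are two concerns. First, the signed message must be chosen by the accuser's view rather than depend on $C$'s secret. We verify this directly, since $\tau$ contains only public values and $\kid_3$, all chosen before signing. Second, $\mathsf{disc}=\mathsf{high}$ releases all consumed randomness, including the signing randomness. We handle this by noting that the game discloses only the accuser's randomness, and in the real world the accuser is $S$, whose randomness never touched $\sigma_C$. The accused's signing coins are not part of $\mathsf{st}_u$ after erasure, so nothing beyond the oracle's output is needed. Finally, key confirmation uses $\kconf$, which $\Simm$ derives exactly, so the Finished records add no loss, and the bound is exactly $(\mu_{\mathsf{RS}},\delta_{\mathsf{RS}})$.
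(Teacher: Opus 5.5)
Your proposal is correct and follows the paper's proof. The simulator runs the honest client code but signs $\tau$ with the accuser's ring key; every other component is identically distributed or deterministic in values the accuser holds; and the one remaining difference is exactly the game of Definition~\ref{def:rsden}, where the distinguisher already holds every keypair, so $\mathsf{leak}=\mathsf{disc}=\mathsf{high}$ adds no loss. The difference is presentational: you collapse the paper's identity hops $H_0\to H_1\to H_2$ into one exact reduction and send all $Q$ client signatures through a single oracle. That is the direct $(\mu_\ell,\delta_\ell)$ argument the paper gives separately for Corollary~\ref{cor:multi}; the paper's proof of this theorem treats a single session.
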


\begin{proof}[Proof sketch]
Fix accused party $P$ and accuser $Q$, which holds $\rsk_Q$ and the pairwise source
$k_3^{(P,Q)}$. In \textsc{src} mode only the client signs, so the nontrivial case is $P=C$.
The simulator samples fresh client X25519 material, encapsulates to the public server ML-KEM key,
forms $\tau$ from public values, derives $\ksh\|\kconf$ using $k_3^{(P,Q)}$, and signs with
$\rsk_Q$. Thus it uses only the accuser's secrets and public data.

Real and simulated transcripts differ only in the signing witness: ephemeral values are sampled
identically, and $\tau,\ksh,\kconf$ are deterministic from values available to the simulator.
Replacing $\rsk_P$ by $\rsk_Q$ is therefore exactly the ring-signature deniability game. High
leakage and disclosure add no extra loss because the underlying game already exposes all keypairs
and the remaining state is identically distributed. Full proof appears in
Appendix~\ref{app:fullproofs}.
\end{proof}

\begin{remark}[Quantum status of deniability]\label{rem:quantumdeny}
Theorem~\ref{thm:deny} inherits the ring signature's $(\mu,\delta)$ guarantee. The concrete \FalconRS/\Gandalf parameters used here are classical-ROM~\cite{Katsumata25}; the QROM results of Beckmann--Majenz~\cite{Beckmann26} establish unforgeability and anonymity properties, but not the deniability relation required by Theorem~\ref{thm:deny}.
\end{remark}

\begin{remark}[Source-less mode]\label{rem:srcless}
Source-less mode contains two signatures. Replacing them sequentially applies the deniability hop
twice, yielding
\[
  \left(
    \mu_{\mathsf{RS}}^2,\,
    \delta_{\mathsf{RS}}(1+\mu_{\mathsf{RS}})
  \right).
\]
\end{remark}

\begin{corollary}[Multi-session deniability]\label{cor:multi}
Let $(\mu_1,\delta_1)$ denote the ring signature's one-query deniability
parameters and $(\mu_\ell,\delta_\ell)$ its Definition~\ref{def:rsden}
parameters for $Q=\ell$ queries. For $\ell$ sessions under one ring key,
\HDKEM satisfies both the direct bound $(\mu_\ell,\delta_\ell)$ and the
generic hybrid bound
\[
 \left(\mu_1^\ell,
 \delta_1\sum_{j=0}^{\ell-1}\mu_1^j\right).
\]
\end{corollary}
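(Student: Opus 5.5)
We prove the two bounds separately, both by reduction to the single-session simulator of Theorem~\ref{thm:deny}. For $\ell$ sessions, the $\ell$-session simulator runs that simulator once per session, each time with fresh client X25519 material, a fresh ML-KEM encapsulation, and the accuser's secrets $\rsk_Q$ and $k_3^{(P,Q)}$. Because each $\kid_3$ is fresh and each $k_3$ is consumed once, the non-signature parts of the $\ell$ transcripts are sampled independently and identically in the real and simulated worlds. In particular, $\tau_j$, $\ksh^{(j)}$ and $\kconf^{(j)}$ are computable from values the simulator holds. The real and simulated $\ell$-session games therefore differ only in the witness used for the $\ell$ signatures $\sigma_1,\dots,\sigma_\ell$ on $\tau_1,\dots,\tau_\ell$.

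\emph{Direct bound.} Given a distinguisher $\mathcal{D}$ for the $\ell$-session deniability game, we build an adversary $\mathcal{B}$ against Definition~\ref{def:rsden} with $Q=\ell$. $\mathcal{B}$ holds every keypair, as that game allows. It generates all ephemeral and source values itself and obtains each signature $\sigma_j$ by querying its signing oracle on $\tau_j$. It answers leakage and disclosure, including high levels, from the keys and state it holds, exactly as in Theorem~\ref{thm:deny}. It then outputs $\mathcal{D}$'s verdict. If the oracle's bit selects $\rsk_P$, $\mathcal{D}$'s view is exactly the real game; otherwise it is exactly the simulated game. The relation of Definition~\ref{def:hs} with $(\mu_\ell,\delta_\ell)$ therefore transfers without loss. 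One point needs care: the messages $\tau_j$ may depend adaptively on earlier transcripts. This is allowed because Definition~\ref{def:rsden} permits adaptive queries.

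\emph{Hybrid bound.} Define hybrids $H_0,\dots,H_\ell$, where in $H_j$ the first $j$ sessions are signed with $\rsk_Q$ and the rest with $\rsk_P$. Thus $H_0$ is the real game and $H_\ell$ is the simulated game. Let $p_j=\Pr[\mathcal{D}(H_j)=0]$. Adjacent hybrids differ in one signature, so a one-query reduction $\mathcal{B}_j$ embeds its oracle at session $j+1$ and signs all other sessions itself. This is possible because $\mathcal{B}_j$ holds both keys. The reduction gives
\[
p_j\le\mu_1 p_{j+1}+\delta_1 .
\]
Unrolling from $j=0$ gives
\[
p_0\le\mu_1^\ell p_\ell+\delta_1\sum_{j=0}^{\ell-1}\mu_1^j .
\]
This is the claimed bound and matches the two-step computation of Remark~\ref{rem:srcless}.

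\emph{Main obstacle.} The delicate step is the orientation of the multiplicative factor when chaining. Each hop must be applied in the direction real-side $\le \mu\cdot$ simulated-side. The distinguisher for hop $j$ is the composite of $\mathcal{D}$ with the hybrid embedding, which is still efficient, so Definition~\ref{def:hs} applies to it. We must also confirm that the $\mu_1$ in each hop is the one-query parameter. This holds because each $\mathcal{B}_j$ makes exactly one oracle query, with all remaining signatures produced from keys that $\mathcal{B}_j$ legitimately holds in the all-keys game.
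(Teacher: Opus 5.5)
Your proposal is correct and follows the paper's proof essentially step for step. For the direct bound you use a single adversary making $Q=\ell$ signing queries. For the generic bound you use hybrids that simulate the first $j$ sessions, with each hop a one-query reduction giving $p_j\le\mu_1 p_{j+1}+\delta_1$, unrolled to the geometric bound.

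The one inessential imprecision is your claim that the non-signature parts are sampled independently across sessions. Independence need not hold for an adaptive accuser, as your own adaptivity remark concedes, and it is not needed: as the paper notes, it suffices that these parts are generated identically in adjacent hybrids, which your reductions already ensure.
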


\begin{proof}[Proof sketch]
For the direct bound, one ring-signature deniability adversary answers all
$\ell$ signing challenges while simulating every other HDKEM component, so
there is no session-by-session loss beyond the primitive's own $Q=\ell$
parameters. Independently, replacing one session at a time and unrolling
$a_{i-1}\le\mu_1a_i+\delta_1$ gives the displayed geometric bound. The
hybrid argument does not require independent transcripts: public metadata
may be shared or correlated provided it is generated identically in the two
adjacent hybrids and the ring-signature guarantee remains valid with that
auxiliary information. Correlation through reused signer randomness or
witness-dependent metadata is not covered and must be charged at the
primitive/implementation level. Appendix~\ref{app:fullproofs} gives both
reductions.
\end{proof}

\begin{remark}[Tightness and rotation guidance]\label{rem:degrade}
The geometric expression is a conservative composition upper bound, not a
matching lower bound, and we have no evidence that typical distinguishers
attain it. Whenever a concrete $Q=\ell$ ring-signature bound is available,
the direct $(\mu_\ell,\delta_\ell)$ reduction should be used instead. With
only the reported per-session values $(1+2^{-27},2^{-57})$, the geometric
bound keeps multiplicative loss below $1+2^{-7}$ for about $2^{20}$ sessions
and becomes vacuous by about $2^{40}$; this cadence is proven-safe, not
proven-optimal.
\end{remark}

\begin{proposition}[No global deniability]\label{prop:noglobal}
\HDKEM is not globally deniable for polynomial $\mu$ and negligible $\delta$.
\end{proposition}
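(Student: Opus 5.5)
We prove Proposition~\ref{prop:noglobal} by showing that in the global game no admissible simulator can produce an accepting transcript. Global deniability requires a simulator that two accused honest parties $S$ and $C$ can invoke to deny the same session simultaneously. Equivalently, an outsider (or a party that must deny on behalf of both endpoints) must be able to generate the transcript using only its own secrets and public data. In the global game, neither $\rsk_S$ nor $\rsk_C$ is available to the simulator. Moreover, the fixed ring $R=\{\rvk_S,\rvk_C\}$ of F2 contains no further member whose secret could be used. The plan is therefore to combine Proposition~\ref{prop:containment} with unforgeability, in the spirit of the no-key branch of Theorem~\ref{thm:alloc}.

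\textbf{Step 1.} Define an explicit distinguisher $\mathcal{D}$ that outputs $0$ iff $\mathsf{RS.Verify}(R,\tau,\sigma_C)=1$. It also checks that $\tau$ is consistent with the public transcript values, and with the source-less header signature when applicable. In \textsc{hide} mode, the high leakage/disclosure levels give $\mathcal{D}$ the source state needed to derive $K_{\mathsf{enc}}$ and decrypt $\mathsf{auth}_C$. By correctness, $\Pr[\mathsf{Game}^{\mathsf{global}}(\real)=0]\ge 1-\mathsf{negl}$.

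\textbf{Step 2.} Assume a global simulator exists with polynomial $\mu$ and negligible $\delta$. Definition~\ref{def:hs} then gives
\[
\Pr[\mathsf{Game}^{\mathsf{global}}(\sm)=0]\ge (1-\mathsf{negl}-\delta)/\mu,
\]
which is non-negligible.

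\textbf{Step 3.} Build an EUF-CMA forger from the simulator. The forger is given $\rvk_S,\rvk_C$ and a signing oracle for $R$. It runs the global game: real-mode sessions other than the challenge are answered with the oracle, and everything else is generated itself. The simulator is run only on inputs it is allowed, namely public data and non-member secrets. Pairwise source secrets are sampled by the reduction, since they are independent of the ring keys. The forger outputs the simulated $(\tau,\sigma)$.

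By F3, $\tau$ is fresh with overwhelming probability because it contains a newly sampled ML-KEM ciphertext $c$. It is therefore a signature on an unqueried message, and so $\eps_{\mathsf{sig}}\ge(1-\delta)/\mu-\mathsf{negl}$. This contradicts unforgeability (Definition~\ref{def:eufcma}).

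\textbf{Main obstacle.} The delicate point is pinning down exactly what secrets the global simulator may use, so that none of them is a witness for $R$. If leakage is $\mathsf{med}$ or $\mathsf{high}$, the distinguisher may eventually learn $\rsk_S,\rsk_C$, but the simulator does not. The reduction must therefore handle $\mathcal{D}$'s post-hoc leakage. It does so by stopping and outputting the forgery before leakage is issued; the distinguisher used in Step 1 needs only public verification, so no leakage is required to evaluate it. One must also address \textsc{hide} mode, where the signature is encrypted under a key derived from $k_3$; there I would either require disclosure of $k_3$ or restrict the proposition to the verification-exposing configuration. Finally, this argument is consistent with Corollary~\ref{cor:relgate}: \HDKEM has no unauthenticated ring member, so the varying-ring escape is unavailable.
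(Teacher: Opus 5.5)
Your proposal is correct and is essentially the paper's argument. The honest-acceptance distinguisher together with Definition~\ref{def:hs} forces non-negligible simulated acceptance, and the no-key forger from the proof of Theorem~\ref{thm:alloc} then turns a secret-free global simulator into an EUF-CMA forgery for the fixed ring $R=\{\rvk_S,\rvk_C\}$. Your extra care makes explicit what the paper's sketch leaves implicit: halting the forger before leakage is issued, and restricting to the publicly verifiable \textsc{src} configuration (the right call, since in \textsc{hide} mode $K_{\mathsf{enc}}$ depends on the ephemeral $k_1,k_2$ and not only on $k_3$).
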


\begin{proof}[Proof sketch]
A global simulator holding neither endpoint's secret key can produce a valid signature for $R=\{\rvk_S,\rvk_C\}$ only by an EUF-CMA forgery. Simulated acceptance is therefore negligible, while real acceptance is overwhelming.
\end{proof}

By Corollary~\ref{cor:relgate}, this limitation is structural rather than accidental: at fixed $\Auth$, global deniability requires a varying ring. \S\ref{sec:scope} discusses its cost.

\begin{table}[t]
\centering
\caption{Single-session deniability guarantees in \textsc{src} mode
using the reported \FalconRS parameters
$(1{+}2^{-27},2^{-57})$. Source-less mode is given by
Remark~\ref{rem:srcless}; multi-session degradation by
Corollary~\ref{cor:multi}.}

\label{tab:denresults}
\footnotesize
\begin{tabular}{@{}llll@{}}
\toprule
Scope & Accuser & $(\mathsf{leak},\mathsf{disc})$ & $(\mu,\delta)$ \\
\midrule
local & honest-but-curious & (high, high) & $(1{+}2^{-27},2^{-57})$ \\
local & malicious & (high, high) & $(1{+}2^{-27},2^{-57})$ \\
global & any & --- & none (Prop.~\ref{prop:noglobal}) \\
\bottomrule
\end{tabular}
\end{table}

\section{Implementation and Evaluation}\label{sec:eval}

We evaluate the computational and communication costs of \HDKEM, with complete end-to-end measurements for the source-backed configuration.

\subsection{Implementation and Methodology}\label{sec:setup}
\paragraph{Protocol implementation:}
The client, server, and microbenchmark share one implementation of ML-KEM-768, X25519, SHAKE256, \Ascon-128a, transcript serialization, the pairwise source interface, and ring-signature dispatch. The implementation uses the canonical $\mathsf{Tr}$ encoding of \S\ref{sec:protospec} unchanged for signing, verification, and the split-key PRF. The source secret $k_3$ is never serialized, and ring verification keys are provisioned out of band.

For \FalconRS, we implement the two-member \FalconRS construction of Katsumata et al.~\cite{Katsumata25}. The \Gandalf path follows the signing and verification construction of Gajland et al.~\cite{Gajland24}
and its published wire dimensions. In the measured code, key generation and signer sampling use Falcon routines from PQClean rather than the Antrag/Mitaka components of the published Gandalf implementation; nonsigner sampling and the fixed-width 606-byte vector encoding are implemented locally. We therefore report these measurements as timings for our Gandalf-compatible evaluation path rather than as measurements of the published implementation; \S\ref{sec:rsinst} states the corresponding protocol-level abstraction.

\paragraph{Methodology:}
Primitive benchmarks use $2{,}000$ operations and report median, p05, and p95. We additionally run $500$ successful source-backed \FalconRS server sessions covering ephemeral generation, ring verification, ML-KEM decapsulation, X25519, source lookup, split-key derivation, Finished processing, framing, and local socket I/O. The benchmark source derives a 32-byte $k_3$ from a pairwise master secret and fresh 16-byte $\kid_3$ to exercise the source interface; it does not model QKD generation or physical-layer performance.
\paragraph{Benchmark environment:}
Measurements were collected in a VirtualBox VM with four vCPUs and
8 GB RAM, running Ubuntu~22.04.5 LTS
(kernel~6.8.0-136-generic) on an AMD Ryzen~5~8640HS host. We compiled
with GCC~12.3.0 using \texttt{-O2 -g -std=c11}; binaries link against
liboqs~9, OpenSSL~3, and libsodium~26. All reported experiments use this
same VM and build configuration.

\begin{table}[t]
\centering
\caption{Primitive microbenchmarks
($2{,}000$ operations, 32-byte source), in ms;
\Gandalf denotes our compatible evaluation path
(\S\ref{sec:setup}).}
\label{tab:microbench}
\scriptsize
\setlength{\tabcolsep}{3.0pt}
\renewcommand{\arraystretch}{0.90}
\begin{tabular}{@{}lrrr@{}}
\toprule
Operation & med. & p05 & p95 \\
\midrule
ML-KEM Gen.   & .0856 & .0499 & .1845 \\
ML-KEM Enc.   & .0495 & .0493 & .0921 \\
ML-KEM Dec.   & .0822 & .0578 & .2234 \\
X25519 Gen.   & .0400 & .0329 & .0648 \\
X25519 DH     & .0633 & .0631 & .1483 \\
Src.Get       & .0009 & .0009 & .0009 \\
$\skPRF$      & .0081 & .0081 & .0146 \\
\Ascon        & .0004 & .0004 & .0004 \\
\midrule
\Gandalf-compatible Gen.
               & 38.0172 & 17.8105 & 107.9008 \\
\Gandalf-compatible Sign
               & 18.6147 & 16.0403 & 69.9487 \\
\Gandalf-compatible Verify
               & .9725 & .9291 & 1.6522 \\
\FalconRS Gen.
               & 50.0985 & 20.8594 & 147.6070 \\
\FalconRS Sign
               & 9.1987 & 7.8058 & 13.5069 \\
\FalconRS Verify
               & 1.0181 & .8572 & 1.8336 \\
\bottomrule
\end{tabular}
\end{table}

\subsection{Results}\label{sec:results}
\paragraph{Computation:}
Table~\ref{tab:microbench} shows that ring signing is the dominant local primitive. Median ML-KEM encapsulation/decapsulation is $0.0495/0.0822$\,ms, X25519 DH is $0.0633$\,ms, and the split-key PRF is $0.0081$\,ms, compared with $9.20/1.02$\,ms sign/verify for \FalconRS and $18.61/0.97$\,ms for the \Gandalf-compatible path. 

\paragraph{Complete server path:}
Across $500$ successful source-backed \FalconRS sessions, median server time is $16.29$\,ms (p05 $11.24$\,ms; p95 $24.06$\,ms). Component medians include $1.56$\,ms ring verification, $0.93$\,ms Finished AEAD, $0.15$\,ms ephemeral generation, $0.13$\,ms ML-KEM decapsulation, and $0.09$\,ms X25519. The total also includes serialization, framing, control flow, and local socket I/O.
\vspace{-4mm}
\paragraph{End-to-end interpretation:}
The integrated run measures the complete accepted server path, including
source lookup, signature verification, split-key derivation, Finished
processing, framing, and socket I/O. Component timings in this server
loop are measured in a different execution context from the isolated
microbenchmarks, so we report the two experiments separately rather than
using isolated medians to predict end-to-end latency. Only $\kid_3$ is
serialized, so increasing the source from 32 to 48 bytes does not change
the wire format.

\begin{table}[t]
\centering
\caption{Exact source-mode wire accounting (bytes); ring verification keys are provisioned out of band.}
\label{tab:handshake}
\scriptsize\setlength{\tabcolsep}{2.4pt}\renewcommand{\arraystretch}{0.91}
\begin{tabular}{@{}lrr@{}}
\toprule
Quantity & \Gandalf & \FalconRS \\
\midrule
Hello payload & 1238 & 1238 \\
Client-auth payload & 2395 & 2447 \\
Server / client Finished & 2453 / 2453 & 2453 / 2453 \\
Protocol payload & 8539 & 8591 \\
Frame headers / TCP app. & 16 / \textbf{8555} & 16 / \textbf{8607} \\
Transcript / signature & 2397 / 1236 & 2397 / 1288 \\
\bottomrule
\end{tabular}
\vspace{-2mm}
\end{table}
\vspace{-4mm}
\paragraph{Communication:}
The authenticated transcript is $2397$,B for both instantiations. The
$52$-byte difference in TCP application data is exactly the difference
between the two-member signature formats, so the source-backed exchange
remains below $8.7$,KB in both evaluated configurations. These
measurements characterize the implementation cost of the allocation
realized by \HDKEM rather than claiming a performance advantage over
conventional AKEs: ring-signature authentication dominates local cost,
while the source contributes only its public identifier to the wire
format. Physical QKD latency, WAN behavior, constrained platforms, and
traffic-analysis leakage are outside this experiment.

\section{Discussion}\label{sec:discussion}
\paragraph{Authentication and deniability:}
The allocation theorem separates two questions that are often conflated: who can produce an accepting authenticator and how closely a simulator can match the real signing distribution. \HDKEM keeps the authentication set fixed and pays the witness-hiding gap of the ring signature. A varying-ring design such as \RingXKEM takes the complementary route and can support global denial without comparing two witnesses for the same statement. Neither choice dominates universally; the appropriate allocation depends on the deniability scope required by the application.

\paragraph{Role of the source:}
The third slot changes the exposure trade-off rather than the
deniability mechanism. Deniability remains inherited from the ring
signature, while $k_3$ supplies the only surviving confidentiality
input in C2--C4. Consequently, source-backed \HDKEM obtains C1--C4
coverage without introducing a static KEM/DH confidentiality key;
compromise instead follows the lifetime of the pairwise source.
Source-less \HDKEM gives up C2--C4.

Table~\ref{tab:staticcmp} summarizes the trade-off: static KEM/DH
secrets cover all eCK cases without an out-of-band source, whereas
\HDKEM assigns that role to a pairwise source whose exposure follows
source rotation.

\begin{table}[t]
\centering
\caption{Static--ephemeral trade-offs. Exposure follows the lifetime of the surviving non-ephemeral secret.}
\label{tab:staticcmp}
\scriptsize\setlength{\tabcolsep}{2.1pt}\renewcommand{\arraystretch}{0.92}
\begin{tabular}{@{}p{1.55cm}p{.55cm}p{.76cm}p{1.0cm}p{1.33cm}@{}}
\toprule
Design & eCK & OOB secret & Retroactive secret & Exposure \\
\midrule
\HDKEM, \textsc{src} & C1--4 & pairwise $k_3$ & $k_3$ & source epoch \\
\HDKEM, no source & C1 & none & --- & none \\
X3DH/PQXDH & C1--4 & none & static DH/KEM & identity-key rotation \\
KEMTLS & C1--4 & none & static KEM & cert. rotation \\
NAXOS-style & C1--4 & none & long-term key & key rotation \\
\bottomrule
\end{tabular}
\end{table}

\section{Conclusion}\label{sec:conclusion}

We formulate fresh single-ring deniable authentication as an authentication-allocation problem: simulation may rely on a shared witness, a distinct witness for the same ring, or a witness for a different verified ring. For fixed-ring protocols, our allocation theorem connects the witness-indistinguishability gap $\epsWI$ to the achievable $(\mu,\delta)$ deniability parameters.

\HDKEM instantiates the fixed-ring witness-hiding design with an ML-KEM-768/X25519/source combiner and achieves eCK key indistinguishability, forward secrecy, and strong local deniability under the stated component assumptions. Our evaluation shows that ring-signature authentication dominates local cryptographic cost, while the source-backed handshake requires less than $8.7$,KB of TCP application data. These results characterize the cost of preserving endpoint-only authentication while retaining efficient local deniability in a post-quantum hybrid key exchange.

\clearpage
\appendix

%-------------------------------------------------------------------------------
% Required / recommended submission appendices
%-------------------------------------------------------------------------------

\section*{Ethical Considerations}\label{app:ethics}
This work studies cryptographic protocol design and evaluates locally executed
implementations. It does not involve human subjects, personally identifiable
information, user studies, or measurements of third-party systems. The
experiments use locally generated cryptographic inputs and synthetic protocol
sessions. The mechanisms are intended for defensive key establishment; the
paper does not introduce an operational attack technique or collect data from
deployed users or networks. 

\section*{Open Science}\label{app:openscience}
An anonymized repository accompanies the submission and contains the \HDKEM
source code, build instructions, benchmark programs, CryptoVerif and ProVerif models, verifier outputs, and scripts used for the measurements in
\S\ref{sec:eval}:
\url{https://anonymous.4open.science/r/HDKEM-SEC27/}.
The repository records the commands needed to reproduce the formal checks, primitive benchmarks, server experiment, and exact wire-size accounting.
The artifact contains no human-subject data. The anonymous link is intended
to remain available throughout the review period.

%-------------------------------------------------------------------------------
% References
%-------------------------------------------------------------------------------

\clearpage

%-------------------------------------------------------------------------------
% Optional technical appendices
%-------------------------------------------------------------------------------

\section{Design Rationale and Static--Ephemeral Comparison}\label{app:designmoved}
\subsection{Rationale for a Third Slot and Static--Ephemeral Trade-off}\label{sec:rationale}

\paragraph{Why a third slot:}\label{sec:whythird}

A two-slot combiner using a classical NIKE and a post-quantum KEM already preserves quantum
confidentiality through the KEM slot, so $k_3$ is not needed for post-quantum confidentiality alone.
Its value is instead twofold.

First, with a fully ephemeral KEM/NIKE core, $k_3$ is the only surviving confidentiality slot in
three of the four eCK cases (\S\ref{sec:eckcases}); \S\ref{sec:staticcmp} compares this choice with
designs retaining long-term KEM or DH secrets.

Second, $k_3$ reduces reliance on the concrete ML-KEM/QROM bound. At $\qh=2^{60}$, ML-KEM's
One-Way-to-Hiding loss is approximately $2^{-98}$. The benefit, however, is limited by the split-key
PRF analysis: under the multi-oracle bound, a $256$-bit source gives
$\eps_{\mathsf{prf}}\approx2^{-45.5}$, which dominates the combiner, while $\beta=384$ restores
headroom below the ML-KEM term. This is inexpensive for a sufficiently long pre-shared key, while for
QKD the additional concrete margin depends on the realization error
$\eps_{\mathsf{src}}=\eps_{\mathsf{qkd}}$ of the deployed QKD service.

The source slot is deliberately independent of deniability. Deniability is provided entirely by the
ring signature; making it depend on $k_3$ would remove deniability in source-less mode and duplicate
the role of the public-key authenticator.

\paragraph{Static--ephemeral comparison:}\label{sec:staticcmp}

Protocols that retain a long-term KEM or DH secret, including X3DH/PQXDH, KEMTLS, and NAXOS-style
designs, can cover all four eCK cases without an out-of-band source. This is a genuine advantage over
\HDKEM's source-less mode. The corresponding cost is a longer-lived cryptographic secret whose later
compromise may expose a larger retrospective window.

Table~\ref{tab:staticcmp} summarizes the main trade-off; Appendix~\ref{app:staticcmp} expands the exposure-window calculation for representative key-rotation schedules. Because our theorems are stated over
abstract KEM, NIKE, signature, and source interfaces, compatible primitive replacements require no
change to the proof structure.

\section{Additional Evaluation Context}\label{app:evalextras}

\subsection{Case Study: Legacy Infrastructure Protocols}\label{sec:casestudy}

As a deployment case study, we consider protecting ANSI~C12.22, DNP3, IEC~61850, and
IEEE~C37.118.2 payloads with Ascon-128a under $\ksh$. Their $30$--$51$-byte messages represent a
setting in which \HDKEM's provisioning assumptions are particularly plausible: infrastructure
devices may remain deployed for ten to thirty years, making long-horizon confidentiality and
deniability relevant, while pairwise secrets can be installed at commissioning.

On our reference platform, an \Ascon-128a operation takes less than a microsecond, so payload
protection is negligible relative to ring-signature authentication in Table~\ref{tab:microbench}. We leave
protocol-specific end-to-end latency measurements to future evaluation.

\begin{table}[!t]
\centering
\caption{Authentication allocation of related constructions.}
\label{tab:baseline}
\scriptsize\setlength{\tabcolsep}{2.5pt}\renewcommand{\arraystretch}{0.92}
\begin{tabular}{@{}lllc@{}}
\toprule
Scheme & Deniability basis & Alloc. & Static \\
\midrule
\XWing & none & --- & no \\
PQ-TLS + ML-DSA-44 & none & --- & no \\
KEMTLS & none & --- & yes \\
\Shadowfax & ring anonymity & hiding & no \\
\RingXKEM (\Gandalf) & anonymity, $2^{-30}$ & varying ring & no \\
\RingXKEM (\FalconRS) & $(1{+}2^{-27},2^{-57})$ & varying ring & no \\
\bottomrule
\end{tabular}
\end{table}

\begin{table}[!t]
\centering
\caption{Representative constructions by authentication allocation.}
\label{tab:cells}
\scriptsize\setlength{\tabcolsep}{2.4pt}\renewcommand{\arraystretch}{0.92}
\begin{tabular}{@{}p{1.05cm}p{1.70cm}p{3.55cm}@{}}
\toprule
Allocation & Construction & Basis \\
\midrule
Witness-sharing & MAC-auth. AKE & peers share the authentication witness \\
Witness-hiding & \HDKEM & distinct endpoint ring witnesses; Thm.~\ref{thm:deny} \\
Varying ring & \RingXKEM~\cite{Hashimoto25} & auxiliary unauthenticated member changes the verified ring \\
\bottomrule
\end{tabular}
\end{table}

\section{Constructions by Allocation}\label{app:cells}
Table~\ref{tab:cells} gives representative placements that follow directly from the authentication mechanism: shared-MAC AKE is witness-sharing, \HDKEM is witness-hiding by Theorem~\ref{thm:deny}, and the auxiliary-member form of \RingXKEM is varying-ring. Table~\ref{tab:baseline} compares additional constructions without claiming an exhaustive formal classification.

\section{Static-Ephemeral Comparison and Proof Portability}\label{app:staticcmp}
This appendix expands \S\ref{sec:staticcmp}. In a static-ephemeral design the static secret is the
surviving slot in exactly the cases where the ephemeral material is compromised (C2--C4), so its
later compromise retroactively removes the guarantee for every past session back to the last
rotation. \HDKEM's third slot plays the same role in the same cases, but $k_3$ is re-derived per
epoch (\S\ref{sec:threat}), so a single compromise reaches only one epoch's sessions. Writing
$T_{\mathrm{lt}}$ for the static key's lifetime and $T_{\mathrm{ep}}$ for the epoch length, a
static-ephemeral design's exposure window exceeds \HDKEM's by $T_{\mathrm{lt}}/T_{\mathrm{ep}}$
sessions: for the deployments of \S\ref{sec:casestudy}, where certificates run ten to thirty years
against a monthly-rotated device secret, that factor is $10^2$--$10^3$; for consumer messaging with
month-scale identity-key rotation it is close to $1$ and the static-ephemeral design is preferable,
since it needs no provisioning. \HDKEM's requirement is also strictly harder to meet
operationally---a pairwise $32$--$48$ byte secret must exist at commissioning---which holds for
individually commissioned infrastructure endpoints and generally does not for open-enrollment
deployments; \S\ref{sec:staticcmp} favors the static-ephemeral design over source-less \HDKEM in
that case, since source-less mode gives up C2--C4 entirely.

\paragraph{Proof portability:} Every theorem is stated against abstract notions---IND-CCA2, Gap-DH,
EUF-CMA, $(\mu,\delta)$-deniability, split-key PRF, composable source, AEAD---so any qualifying
component substitutes with proofs intact. This is also what makes Remark~\ref{rem:twobounds} a
choice of instantiation rather than a change of theorem: both analyses bound the same abstract
quantity $\eps_{\mathsf{prf}}$, and no proof in \S\ref{sec:sec} distinguishes them. For the published \Gandalf construction, the QROM unforgeability ingredient is supplied by~\cite{Beckmann26}. Remark~\ref{rem:quantumdeny} explains why their QROM anonymity result does not by itself establish the stronger deniability ingredient of Theorem~\ref{thm:deny}. The performance configuration described in \S\ref{sec:eval} is separate from the published \Gandalf security reduction.

\section{Protocol Message Flow}\label{app:flow}
Figure~\ref{fig:flow} summarizes the \textsc{src}-mode handshake referenced from \S\ref{sec:protospec}. Ring keypairs are long-term and provisioned once; only the client signs in this mode.

\begin{figure*}[t]
\centering
\begin{adjustbox}{max width=.96\textwidth,center}
\begin{tikzpicture}[
  font=\fontsize{8}{9}\selectfont,
  hdr/.style={draw,thick,rounded corners=1pt,fill=blue!12,
              text width=3.4cm,align=center,inner sep=3pt,
              font=\bfseries\fontsize{8}{9}\selectfont},
  stepbox/.style={draw,thick,rounded corners=1pt,align=left,
                  text width=6.0cm,inner sep=3pt},
  cbox/.style={stepbox,fill=green!10,anchor=north west},
  sbox/.style={stepbox,fill=green!10,anchor=north east},
  vbox/.style={stepbox,fill=black!5},
  msg/.style={-{Latex[length=2mm]},thick},
  lifeline/.style={densely dashed,thick,black!50},
  srcline/.style={densely dotted,very thick,blue!70!black},
]
\pgfdeclarelayer{bg}\pgfsetlayers{bg,main}
\coordinate (CX) at (0,0);
\coordinate (SX) at (10.6,0);

\node[hdr] (Chead) at (CX) {Client $C$};
\node[hdr] (Shead) at (SX) {Server $S$};

\node[vbox, text width=15.6cm, below=2mm of $(Chead.south)!0.5!(Shead.south)$,
      anchor=north] (setup)
  {\textbf{Setup (out of band).} $R=\{\rvk_S,\rvk_C\}$ installed at both endpoints;
   pairwise source $(\kid_3,k_3)$ (QKD or PSK) provisioned per Sec.~3.6.};

\node[sbox, below=3mm of setup.south east, anchor=north east, text width=6.0cm] (Sgen)
  {$(pk_1^S,sk_1^S)\gets\mathsf{MLKEM.Gen}()$\\
   $(pk_2^S,sk_2^S)\gets\mathsf{X25519.Gen}()$};

\coordinate (helloY) at ($(Sgen.south)+(0,-4mm)$);
\draw[msg] (SX|-helloY) -- node[above,pos=0.5,align=center]
  {\footnotesize Server Hello: $(pk_1^S,\,pk_2^S,\,\ID_S,\,\mathsf{mode})$}
  (CX|-helloY);

\node[cbox, below=2mm of helloY, anchor=north west, text width=5.4cm] (Chs)
  {\textbf{Client handshake}\\[1pt]
   $(pk_2^C,sk_2^C)\gets\mathsf{X25519.Gen}()$\\
   $(c,k_1)\gets\mathsf{Encaps}(pk_1^S)$\\
   $k_2\gets\mathsf{DH}(sk_2^C,pk_2^S)$;\quad $(\kid_3,k_3)\gets$ source\\
   $\tau \gets pk_1^S\|c\|pk_2^C\|pk_2^S\|\ID_C\|\ID_S\|\mathsf{mode}\|\kid_3$\\
   $\ksh\|\kconf\gets\skPRF(k_1,k_2,k_3;\tau)$\\
   $\sigma\gets\mathsf{RS.Sign}(\rsk_C,\tau,R)$\quad {\scriptsize($k_3\notin\tau$)}};

\coordinate (cliHelloY) at ($(Chs.south)+(0,-2mm)$);
\draw[msg] (CX|-cliHelloY) -- node[above,pos=0.5,align=center]
  {\footnotesize Client Hello: $(c,\,pk_2^C,\,\ID_C,\,\kid_3,\,\sigma)$}
  (SX|-cliHelloY);

\node[sbox, below=2mm of cliHelloY, anchor=north east, text width=6.4cm] (Sver)
  {\textbf{Server verification}\\[1pt]
   $k_1\gets\mathsf{Decaps}(sk_1^S,c)$\\
   $k_2\gets\mathsf{DH}(sk_2^S,pk_2^C)$;\quad $k_3\gets\mathsf{Src.GetWithID}(\kid_3)$\\
   re-derive $\tau,\ \ksh\|\kconf$\\
   abort unless $\mathsf{RS.Verify}(R,\tau,\sigma)=1$};

\coordinate (finY) at ($(Sver.south)+(0,-2mm)$);
\draw[msg] (CX|-finY) -- node[above,pos=0.4,align=center]
  {\footnotesize $\mathsf{Fin}_C=\mathsf{AEAD}(\kconf,\nu_C,\mathsf{label}\|\tau)$}
  (SX|-finY);
\coordinate (finY2) at ($(finY)+(0,-6mm)$);
\draw[msg] (SX|-finY2) -- node[above,pos=0.4,align=center]
  {\footnotesize $\mathsf{Fin}_S=\mathsf{AEAD}(\kconf,\nu_S,\mathsf{label}\|\tau)$}
  (CX|-finY2);

\coordinate (bottomY) at ($(finY2)+(0,-2mm)$);
\begin{pgfonlayer}{bg}
\draw[lifeline] (Chead.south) -- (CX|-bottomY);
\draw[lifeline] (Shead.south) -- (SX|-bottomY);
\end{pgfonlayer}
\end{tikzpicture}
\end{adjustbox}
\caption{\HDKEM message flow, \textsc{src} mode ($\vec{k}=(k_1,k_2,k_3)$). $\tau$ binds $c$, both
X25519 values, both identities and $\mathsf{mode}$, but \emph{excludes} $k_3$: ring signatures hide
the signer's identity, not the message content, so $\sigma$ would expose a source secret placed
inside $\tau$. $k_3$ enters only the split-key PRF deriving $\ksh,\kconf$ in Algorithm~\ref{alg:handshake},
keeping the eCK C2--C4 argument and Thm.~\ref{thm:deny} sound.}
\label{fig:flow}
\end{figure*}

\section{Deniability Game (Algorithm)}\label{app:dengame}
This appendix gives the full pseudocode for the game referenced by Definition~\ref{def:mudelta}.

\begin{algorithm}[t]
\caption{Deniability game. Steps marked $\star$ apply only to the global variant.}
\label{alg:dengame}
\begin{algorithmic}[1]
\Function{Game}{$1^\lambda, \mathsf{mode}$}
  \State $\mathcal{C} := \mathcal{N}\setminus\mathcal{H}$
  \For{$u \in \mathcal{N}$} $(\rvk_u,\rsk_u)\gets\mathsf{RS.KeyGen}(1^\lambda)$ \EndFor
  \For{each pair $(u,v)$} provision $k_3^{(u,v)}$ \EndFor
  \State $\mathsf{st}_\mathcal{D}\gets\mathcal{D}^{\mathcal{O}}((\rvk_u)_{u\in\mathcal{N}})$
  \State $\mathsf{lk}\gets\Leak((\rsk_u,k_3^{(u,\cdot)},\mathsf{st}_u)_{u\in\mathcal{H}})$
  \State $\mathsf{dc}\gets\Disc((\rsk_v,k_3^{(v,\cdot)},\mathsf{st}_v)_{v\in\mathcal{C}})$
  \State \Return $b\gets\mathcal{D}(\mathsf{st}_\mathcal{D},\mathsf{lk},\mathsf{dc})$
\EndFunction
\Statex
\Function{$\mathcal{O}$}{$s,r$}
  \State \textbf{require} $s\neq r$ and $(s,r)\notin\mathcal{C}\times\mathcal{C}$
  \State \textbf{require} $(s,r)\notin\mathcal{H}\times\mathcal{H}$ unless global$^\star$
  \If{$\mathsf{mode}=\real$}
    \State $(K,\rho)\gets\Pi.\mathsf{Run}(s,r)$
  \Else
    \If{$(s,r)\in\mathcal{H}\times\mathcal{C}$}
      \State $(K,\rho)\gets\Simm(\rsk_r,k_3^{(s,r)},(s,r))$
    \ElsIf{$(s,r)\in\mathcal{C}\times\mathcal{H}$}
      \State $(K,\rho)\gets\Simm(\rsk_s,k_3^{(s,r)},(s,r))$
    \Else \Comment{global only$^\star$}
      \State $(K,\rho)\gets\Simm(\bot,\bot,(s,r))$
    \EndIf
  \EndIf
  \State \Return $(K,\rho)$
\EndFunction
\end{algorithmic}
\end{algorithm}

\section{Full Proofs Deferred from the Main Text}\label{app:fullproofs}
This appendix gives the full proofs sketched in the body, in the order the corresponding statements
appear.

\paragraph{Lemma~\ref{lem:skprf} (Split-key PRF over a min-entropy slot):}
Distinguishing requires the distinguisher's queries to carry non-negligible amplitude on the
challenge input, hence to find the withheld $\kappa_i$. By independence the search is unstructured,
so the generic-search bound applies: for an oracle whose marking probability is at most
$\lambda_{\mathsf{mk}}$ at every point, no $q$-query quantum algorithm succeeds with probability
better than $8(q+1)^2\lambda_{\mathsf{mk}}$~\cite[Thm.~1]{HRS16}, and by hypothesis
$\lambda_{\mathsf{mk}} \le 2^{-\beta}$. Optimality of quantum search on unstructured
domains~\cite{Bennett97} shows the quadratic dependence on $q$ cannot be improved.

\paragraph{Proposition~\ref{prop:containment} (Containment):}
Let $Q \in \Simm_P$, let $\mathsf{acc}$ be the honest acceptance predicate, and let $\mathcal{D}$ be
the efficient distinguisher that outputs $0$ exactly when $\mathsf{acc}$ accepts. Transcripts
genuinely produced by $P$ are accepted with overwhelming probability, so
$\Pr[\mathcal{D}(\real) = 0] \ge 1 - \mathsf{negl}$. Applying Definition~\ref{def:hs} to $\mathcal{D}$
gives $1 - \mathsf{negl} \le \mu\cdot\Pr[\mathcal{D}(\sm) = 0] + \delta$, hence
$\Pr[\mathsf{acc}(\sm)] \ge (1 - \mathsf{negl} - \delta)/\mu$, which is non-negligible for every
polynomial $\mu$ and negligible $\delta$. So $Q$'s outputs are accepted non-negligibly often, and
$Q \in \Auth$.

\paragraph{Theorem~\ref{thm:alloc} (Allocations in $\mathcal{F}$ are exhaustive):}
By Proposition~\ref{prop:containment}, $Q \in \Auth$, so $Q$ produces accepting transcripts with
non-negligible probability $p$. We first show $Q$ holds a ring secret key.
Suppose not. Build a forger $\mathcal{B}$ against Definition~\ref{def:eufcma} for the ring $R$, which
is fixed and known by F2. $\mathcal{B}$ receives the ring's verification keys and a signing oracle. It
runs $Q$'s simulator, answering every request $Q$ makes for an honest party's signature by querying
the oracle and recording the queried messages in a set $\mathcal{Q}$. When the simulator outputs a
transcript containing $(\tau^*, \sigma^*)$, $\mathcal{B}$ outputs it. By F3, $\tau^*$ differs from
every $\tau$ in $\mathcal{Q}$ except with negligible probability: each contains a distinct fresh
component, and a repeated $\tau^*$ would be rejected by an honest verifier, contradicting acceptance.
So $\mathcal{B}$ outputs a valid signature on an unqueried message with probability at least
$p - \mathsf{negl}$, whence $p \le \eps_{\mathsf{sig}} + \mathsf{negl}$. This contradicts
non-negligible $p$ up to $\eps_{\mathsf{sig}}$, so $Q$ holds a ring secret key.
If the key $Q$ holds is one $P$ also holds, case (i) applies: since $Q \neq P$, that key is available
to a principal other than its legitimate holder, and $\Auth$ contains that principal.
Otherwise $Q$ signs under a key distinct from $P$'s. Let $T$ be the fresh,
signer-independent transcript-message distribution induced by $\Pi$, and let
$\mathcal{D}^*$ attain $\epsWI$ for the joint distributions
$(\tau,\sigma)$ of Definition~\ref{def:wi}. The transcript distinguisher
$\mathcal{D}$ performs no simulation and loses no advantage: it extracts
$(\tau,\sigma)$ and runs $\mathcal{D}^*$. Orienting the output if necessary,
write $a^\star=\Pr[\mathcal{D}(\real)=0]$ and
$b^\star=\Pr[\mathcal{D}(\sm)=0]$ with
$\epsWI=a^\star-b^\star$. Definition~\ref{def:hs} gives
$a^\star\le\mu b^\star+\delta$, hence
$\epsWI\le(\mu-1)b^\star+\delta\le(\mu-1)+\delta$.
Thus the witness-to-transcript reduction is lossless; $\eps_{\mathsf{sig}}$
appears only in excluding the preceding no-key branch. The final relaxation is
black-box tight when $b^\star=1$. Any AOS/RPSF-specific improvement must
therefore come from additional information about the signing distributions,
not from the allocation argument itself.

\paragraph{Theorem~\ref{thm:nike} (NIKE-slot soundness):}
Substituting either X25519 public value changes $\tau$ and hence requires a forgery for $R$, at cost
$\eps_{\mathsf{sig}}$; conditioned on no forgery, both parties hold honestly generated values. For
pseudorandomness, a reduction $\mathcal{B}$ receiving a Gap-DH challenge $(g^a,g^b)$ sets the two
X25519 public values to $g^a$ and $g^b$, generates \emph{both} ring keypairs itself, and answers all
signing queries honestly for either member. In \HDKEM only the client signs (\S\ref{sec:protospec}),
so concretely $\mathcal{B}$ signs the test transcript itself under $\rsk_C$, which it generated; the
point of holding both ring secrets, rather than a proof obligation of the protocol, is that
$\mathcal{B}$ never has to decide in advance which ring member's key the eventual forgery-avoidance
argument of \S\ref{sec:eckcases} will need, so the same reduction serves Theorem~\ref{thm:eck}
unchanged. $\mathcal{B}$ never queries its own EUF-CMA oracle on the challenge message, so holding
both ring secrets introduces no circularity. At the test session $k_2=g^{ab}$ is $\mathcal{B}$'s
output. The oracle-sequencing detail is in Appendix~\ref{app:nike}.

\paragraph{Theorem~\ref{thm:eck} (eCK security):}
Games $G_0\to G_4$. \emph{$G_1$ (reject forgeries).} Abort if an honest session accepts a signature not
produced by an honest party for $R$. Since $\tau$ binds both X25519 values, $pk_1^S$, $c$, both
identities and the mode, any active modification requires a forgery.
$|\Pr[W_1]-\Pr[W_0]|\le\eps_{\mathsf{sig}}$.
\emph{$G_2$ (reject confirmation forgeries and desynchronization).} Abort if an honest session accepts
a Finished message under a $\kconf$ not honestly derived by the intended partner. $\kconf$ derives from
$(k_1,k_2,k_3;\tau)$ with $\tau$ containing $c$, unique per session, so two honest sessions deriving
the same $\kconf$ from different transcripts requires a $\skPRF$ collision or a source-key collision.
$\tau$ does not itself contain $k_3$ (\S\ref{sec:protospec}), so this hop does not rely on $\sigma$
hiding $k_3$; it relies only on $\tau$'s freshness, a public property of $c$.
$|\Pr[W_2]-\Pr[W_1]|\le\eps_{\mathsf{prf}}+\eps_{\mathsf{aead}}$.
\emph{$G_3$ (reduce to a surviving slot).}
Map the test session to Table~\ref{tab:eckcases}. In C1,
\[
  \eps_{\mathsf{slot}}
  =\min\{\eps_{\mathsf{mlkem}},\eps_{\mathsf{dh}},\eps_{\mathsf{src}}\},
\]
while C2--C4 leave only $k_3$ and hence
$\eps_{\mathsf{slot}}=\eps_{\mathsf{src}}$. Therefore
$|\Pr[W_3]-\Pr[W_2]|\le\eps_{\mathsf{slot}}$.

\emph{$G_4$ (idealize the derived keys).}
Apply split-key PRF security to the surviving slot at cost
$\eps_{\mathsf{prf}}$; a single call derives both $\ksh$ and $\kconf$.
Thus $\Pr[W_4]=1/2$.

\paragraph{Theorem~\ref{thm:fs} (Forward secrecy):}
After ephemeral erasure, later disclosure of the ring-signing keys reveals none of
$k_1,k_2,k_3$; the robust-combiner reduction gives the bound of
Theorem~\ref{thm:fs}.

\paragraph{Theorem~\ref{thm:deny} (Strong local deniability):}
Fix an accused $P$ and an accuser $Q$ holding $\rsk_Q$ and $k_3^{(P,Q)}$. In \textsc{src} mode only the
client signs, so take $P = C$; the case $P = S$ is immediate, since the transcript's only signature is
then produced by $Q$ itself and every other component is a deterministic function of $\rsk_Q$,
$k_3^{(P,Q)}$ and public data.
Define $\Simm$: sample a fresh X25519 keypair to play the client's; encapsulate to the server's ML-KEM
public key, which is public, obtaining $(c,k_1)$; compute $k_2$ from the fresh key and the server's
X25519 value; form $\tau$ exactly as in Algorithm~\ref{alg:handshake} --- note $\tau$ needs only public values and does not
require $k_3^{(P,Q)}$ to construct, since $k_3\notin\tau$; derive $\ksh\|\kconf$ by the same split-key PRF step
using $k_3^{(P,Q)}$; sign $\sigma\gets\mathsf{RS.Sign}(\rsk_Q,\tau,R)$ under its own ring key; and
produce the Finished messages under $\kconf$. $\Simm$ uses only $\rsk_Q$, $k_3^{(P,Q)}$ and public
data, so it is runnable by $Q$ as Definition~\ref{def:mudelta} requires.
Three hops separate the real game from the simulated one. \emph{$H_0$} is the real game. \emph{$H_1$}
replaces the client's ephemeral X25519 keypair and encapsulation by ones sampled by $\Simm$. Both are
public-coin operations with no secret input, so the distributions are identical and
$\Pr[H_1]=\Pr[H_0]$. \emph{$H_2$} replaces $\tau$, $\ksh$ and $\kconf$ by $\Simm$'s. Each is a
deterministic function of the values fixed in $H_1$, of public identifiers, and of $k_3^{(P,Q)}$, all
of which $\Simm$ holds, so $\Pr[H_2]=\Pr[H_1]$. \emph{$H_3=\sm$} replaces
$\sigma\gets\mathsf{RS.Sign}(\rsk_P,\tau,R)$ by $\sigma\gets\mathsf{RS.Sign}(\rsk_Q,\tau,R)$. This is
the only remaining difference, and it is exactly what Definition~\ref{def:rsden} quantifies: same
ring, same message, different signing key. Because $\tau$ excludes $k_3$, this is a well-formed
instance of Definition~\ref{def:rsden}'s game regardless of what either party's $k_3$ is; had $\tau$
contained $k_3^{(P,Q)}$, the two signing distributions being compared would additionally depend on a
secret the deniability game's signing oracle does not take as input, and the reduction below would not
go through. A distinguisher $\mathcal{D}$ separating $H_2$ from $H_3$ by more than the
$(\mu_{\mathsf{RS}},\delta_{\mathsf{RS}})$ relation yields an adversary $\mathcal{A}_{\mathsf{RS}}$
against Definition~\ref{def:rsden}: $\mathcal{A}_{\mathsf{RS}}$ receives all keypairs, generates every
other transcript component itself as in $H_2$, obtains $\sigma$ from the signing oracle on
$(\tau, u_0=P, u_1=Q)$, hands the assembled transcript to $\mathcal{D}$, and echoes its output. Hence
$\Pr[H_2=0]\le\mu_{\mathsf{RS}}\cdot\Pr[H_3=0]+\delta_{\mathsf{RS}}$. Chaining the three hops gives the
claimed relation between $\real$ and $\sm$.
At $\mathsf{leak}=\mathsf{high}$, $\mathcal{D}$ receives $\rsk_P$, $k_3^{(P,Q)}$ and all of $P$'s
state. The source secret and the state are used identically in $H_2$ and $H_3$, so they cannot
separate them. Possessing $\rsk_P$ likewise does not help beyond the ring signature's own bound, because Definition~\ref{def:rsden}'s game already hands the adversary \emph{every} keypair; the reduction above is therefore valid verbatim under maximal leakage, with no additional term. The same
argument covers $\mathsf{disc}=\mathsf{high}$. Finally, $\Simm$ does not depend on the accuser's
behaviour: it uses only $\rsk_Q$, $k_3$ and public data, so a malicious accuser may deviate
arbitrarily and $\Simm$ still runs.

\paragraph{Corollary~\ref{cor:multi} (Multi-session deniability):}
For the direct reduction, a single adversary in Definition~\ref{def:rsden}
uses $Q=\ell$ signing queries, one for each HDKEM session, and generates all
non-signature fields exactly as in Theorem~\ref{thm:deny}. Hence the joint
view inherits $(\mu_\ell,\delta_\ell)$ without an additional hybrid loss.
For the generic bound, define $J_0,\ldots,J_\ell$ where $J_i$ simulates the
first $i$ sessions. Adjacent hybrids differ in one signing witness, so
$\Pr[J_{i-1}=0]\le\mu_1\Pr[J_i=0]+\delta_1$; unrolling gives
$\mu_1^\ell$ and $\delta_1\sum_{j=0}^{\ell-1}\mu_1^j$.
No independence assumption is needed for public cross-session metadata that
is identically distributed in adjacent hybrids and admissible as auxiliary
input to the ring-signature game. Reused signing randomness or
witness-dependent metadata can violate that premise and require an
additional primitive-level analysis. We prove no matching lower bound for
the geometric expression.

\paragraph{Proposition~\ref{prop:noglobal} (No global deniability):}
Global deniability requires a simulator holding no party's secrets to produce an accepting transcript.
An accepting transcript contains a signature valid for $R=\{\rvk_S,\rvk_C\}$ on a $\tau$ fresh by F3;
producing one without $\rsk_S$ or $\rsk_C$ contradicts Definition~\ref{def:eufcma} by the forger
construction in the proof of Theorem~\ref{thm:alloc}. Hence simulated acceptance is negligible while
real acceptance is overwhelming, which no polynomial $\mu$ and negligible $\delta$ reconcile.

\section{Full Proof of the Robust Combiner}\label{app:combproof}

We prove Theorem~\ref{thm:comb}. The proof is given for a single test session first and
then extended to multiple sessions.

\paragraph{Experiment:}
Let the test session derive
\[
  K = F(k_1,k_2,k_3;\tau),
\]
where $F=\skPRF$, $\tau$ is the injectively encoded public transcript of
Algorithm~\ref{alg:handshake}, and $K$ denotes the concatenation
$\ksh\|\kconf$. The adversary receives the complete public transcript and all
information permitted by the corresponding freshness experiment. It is challenged on
either the real $K$ or a uniformly random string of the same length.

For each $i\in\{1,2,3\}$, let $\mathsf{Sec}_i$ denote the security experiment for the
$i$th slot:
\[
\begin{aligned}
\mathsf{Sec}_1
  &= \mathsf{IND\mbox{-}CCA2}_{\mathsf{MLKEM}},\\
\mathsf{Sec}_2
  &= \mathsf{GapDH},\\
\mathsf{Sec}_3
  &= \mathsf{Source}.
\end{aligned}
\]
Write the corresponding costs as
$\eps_{\mathsf{mlkem}}$, $\eps_{\mathsf{dh}}$, and
$\eps_{\mathsf{src}}$.

We show that, for every slot $i$ that remains secure,
\[
  \adv^{\mathsf{comb}}_{\HDKEM,i}(\mathcal{A})
  \le \eps_i+\eps_{\mathsf{prf}}.
\]

\paragraph{Context binding:}
A point needed in all three cases is that the ML-KEM ciphertext belongs to the
split-key PRF context:
\[
  \tau =
  pk_1^S\|c\|pk_2^C\|pk_2^S\|\ID_C\|\ID_S\|\mathsf{mode}\|\kid_3.
\]
Hence, for any $c'\neq c$,
\[
  \tau(c')\neq\tau(c)
\]
by injectivity of the transcript encoding; $\kid_3$ is an additional public
context field and does not weaken this implication. This remains true even in the exceptional
case
\[
  \mathsf{Decaps}(sk_1^S,c')
  =
  \mathsf{Decaps}(sk_1^S,c).
\]
Therefore a second ciphertext yielding the same ML-KEM shared secret does not reproduce
the challenge input to $F$: it produces
\[
  F(k_1,k_2,k_3;\tau(c'))
\]
rather than
\[
  F(k_1,k_2,k_3;\tau(c)).
\]
This is why HDKEM does not require ciphertext second-preimage resistance. The
second-preimage attack on a KEM combiner applies when the modified ciphertext is omitted
from the combiner context; optimized constructions such as \XWing therefore establish
C2PRI separately~\cite{Connolly24}.

\paragraph{Case 1: the ML-KEM slot survives:}
Assume $k_1$ is the surviving secure slot. We define hybrids $G_0,G_1,G_2$.

\emph{$G_0$.}
This is the real combiner experiment. The test session contains
\[
  (c^*,k_1^*)\gets
  \mathsf{MLKEM.Encaps}(pk_1^S),
\]
and
\[
  K^*
  =
  F(k_1^*,k_2^*,k_3^*;\tau^*).
\]

\emph{$G_1$.}
Replace $k_1^*$ by an independent uniform value
$u_1$ of the ML-KEM shared-key length while leaving $c^*$ and every other value
unchanged:
\[
  K^*
  =
  F(u_1,k_2^*,k_3^*;\tau^*).
\]

Suppose $\mathcal{A}$ distinguishes $G_0$ from $G_1$ with advantage $\alpha$.
Construct an ML-KEM IND-CCA2 adversary $\mathcal{B}_1$ as follows.

$\mathcal{B}_1$ receives an ML-KEM challenge
\[
  (pk^*,c^*,K_b),
\]
where $K_0$ is the real encapsulated key and $K_1$ is uniform.
It sets the test session's ML-KEM public key to $pk^*$ and ciphertext to $c^*$,
generates the X25519 and source slots itself, forms
\[
  \tau^*
  =
  pk^*\|c^*\|pk_2^C\|pk_2^S\|\ID_C\|\ID_S\|\mathsf{mode},
\]
and returns
\[
  F(K_b,k_2^*,k_3^*;\tau^*)
\]
as the test-session key.

For every adversarial ML-KEM ciphertext $c\neq c^*$,
$\mathcal{B}_1$ forwards $c$ to its IND-CCA2 decapsulation oracle and uses the
returned value when simulating the corresponding protocol session. A query with
$c=c^*$ is disallowed by the IND-CCA2 experiment and corresponds to the test
ciphertext itself.

If $b=0$, $\mathcal{B}_1$ simulates $G_0$; if $b=1$, it simulates $G_1$.
Thus
\[
  \left|
    \Pr[G_0\Rightarrow1]
    -
    \Pr[G_1\Rightarrow1]
  \right|
  \le
  \eps_{\mathsf{mlkem}}.
\]

Notice that no extra assumption is needed for a query $c'\neq c^*$ that happens to
decapsulate to $k_1^*$. Such a query is legal in the IND-CCA2 game; moreover, in
HDKEM it induces $\tau(c')\neq\tau(c^*)$, so it cannot reproduce the test input to
$F$.

\emph{$G_2$.}
Replace
\[
  F(u_1,k_2^*,k_3^*;\tau^*)
\]
by a uniformly random string of equal length.

In $G_1$, the first slot $u_1$ is independent of every other PRF input and hidden
from $\mathcal{A}$. By Definition~\ref{def:skprf},
\[
  \left|
    \Pr[G_1\Rightarrow1]
    -
    \Pr[G_2\Rightarrow1]
  \right|
  \le
  \eps_{\mathsf{prf}}.
\]
Since $G_2$ is the ideal combiner experiment,
\[
  \adv^{\mathsf{comb}}_{\HDKEM,1}(\mathcal{A})
  \le
  \eps_{\mathsf{mlkem}}+\eps_{\mathsf{prf}}.
\]

\paragraph{Case 2: the X25519 slot survives:}
Assume $k_2$ is the surviving secure slot. Define $G_0$ as above and let $G_1$
replace
\[
  k_2^*=\mathsf{DH}(sk_2^C,pk_2^S)
\]
by an independent uniform group-secret value $u_2$.

A distinguisher between these games yields a Gap-DH adversary $\mathcal{B}_2$:
given the corresponding challenge public values, $\mathcal{B}_2$ installs them as
$pk_2^C$ and $pk_2^S$, generates the ML-KEM and source slots itself, and uses the
challenge DH value as the second PRF input. Hence
\[
  \left|
    \Pr[G_0\Rightarrow1]
    -
    \Pr[G_1\Rightarrow1]
  \right|
  \le
  \eps_{\mathsf{dh}}.
\]

Conditioned on $G_1$, $u_2$ is independent of the remaining inputs and of $\tau^*$.
The split-key PRF definition therefore gives
\[
  \left|
    \Pr[G_1\Rightarrow1]
    -
    \Pr[G_2\Rightarrow1]
  \right|
  \le
  \eps_{\mathsf{prf}},
\]
where $G_2$ replaces the derived key by uniform. Thus
\[
  \adv^{\mathsf{comb}}_{\HDKEM,2}(\mathcal{A})
  \le
  \eps_{\mathsf{dh}}+\eps_{\mathsf{prf}}.
\]

Again, arbitrary behavior of the ML-KEM slot does not invalidate the reduction.
In particular, if an attacker changes $c$ while preserving its decapsulated value,
the resulting transcript context changes from $\tau(c)$ to $\tau(c')$ and therefore
constitutes a different invocation of $F$.

\paragraph{Case 3: the source slot survives:}
For QKD, replace $k_3^*$ by an ideal uniform source at cost
$\eps_{\mathsf{ks}}$ and then apply split-key PRF security. For a
pre-shared source satisfying the stated independence and min-entropy
premise, Lemma~\ref{lem:skprf} applies directly, so there is no separate
source-idealization loss. Hence in either case

$$
  \adv^{\mathsf{comb}}_{\HDKEM,3}(\mathcal{A})
  \le
  \eps_{\mathsf{src}}+\eps_{\mathsf{prf}}.
$$

Changing $c$ changes $\tau$, so even a compromised ML-KEM slot cannot
reproduce the test PRF input with a different ciphertext.

\paragraph{Combining the cases:}
Let $I_{\mathsf{sec}}\subseteq\{1,2,3\}$ be the set of slots whose corresponding
security assumption holds. Since the above reduction may be instantiated with any
$i\in I_{\mathsf{sec}}$,
\[
  \adv^{\mathsf{comb}}_{\HDKEM}(\mathcal{A})
  \le
  \min_{i\in I_{\mathsf{sec}}}
  \bigl(\eps_i+\eps_{\mathsf{prf}}\bigr)
  =
  \min_{i\in I_{\mathsf{sec}}}\eps_i
  +\eps_{\mathsf{prf}}.
\]

\paragraph{Multiple sessions:}
For $\ell$ sessions, define hybrids
$H_0,\ldots,H_\ell$, where $H_j$ idealizes the selected surviving slot in the
first $j$ sessions and leaves sessions $j+1,\ldots,\ell$ real. Adjacent hybrids
differ in exactly one session. Applying the corresponding single-session reduction
to that session gives
\[
  \left|
    \Pr[H_{j-1}\Rightarrow1]
    -
    \Pr[H_j\Rightarrow1]
  \right|
  \le
  \eps_i+\eps_{\mathsf{prf}}.
\]
The triangle inequality yields
\[
  \adv^{\mathsf{comb},\ell}_{\HDKEM,i}(\mathcal{A})
  \le
  \ell\bigl(\eps_i+\eps_{\mathsf{prf}}\bigr).
\]
The reduction changes one fixed session per hybrid and therefore requires no
guessing step inside an individual hop; the overall multi-session loss is the
standard additive hybrid loss.
\qed

\section{Oracle Sequencing for Theorem~\ref{thm:nike}}\label{app:nike}
$\mathcal{B}$ proceeds through three games. $G_0$ is the real NIKE-slot experiment. $G_1$ aborts on any
forgery over an X25519 public value not generated by $\mathcal{B}$; the gap is $\eps_{\mathsf{sig}}$,
and $\mathcal{B}$ answers honest signing queries itself since it holds both ring secrets, generating
the client's signature as \HDKEM actually does (only the client signs). $G_2$ embeds the
Gap-DH challenge $(g^a,g^b)$ as the two X25519 public values and answers decision queries with the gap
oracle; at the test session $k_2=g^{ab}$ is the output. Summing gives
$\eps_{\mathsf{dh}}+\eps_{\mathsf{sig}}$. No step invokes anonymity or deniability of the ring
signature, which is why Theorem~\ref{thm:nike} is independent of Theorem~\ref{thm:deny}.

\section{Split-Key PRF Instantiation}\label{app:prf}
Instantiating $F$ as SHAKE256 with $64$-byte output and the injective length-prefixed encoding
of~\eqref{eq:skprf} meets Definition~\ref{def:skprf}, whose context-freshness restriction is discharged
by the per-session ML-KEM ciphertext in $\tau$. Two analyses bound the resulting advantage.

\emph{Monolithic model.} Modeling $F$ as a single quantum random oracle on the encoded input puts
Lemma~\ref{lem:skprf} in force, using~\cite[Thm.~1]{HRS16}. For $\ell\le 2^{40}$ sessions and
$q\le 2^{60}$ queries with $\beta=256$, $\eps_{\mathsf{prf}}\le 8\cdot 2^{120}\cdot 2^{-256}
\le 2^{-133}$.

\emph{Multi-oracle model.} Don, Fehr and Huang analyze the identical two-stage construction and obtain
the square-root bound of their Corollary~1. At $q_{\mathcal{H}} = 2^{60}$, $q_{\mathcal{O}} = 2^{40}$
and $\eps = 2^{-256}$ the two terms are $2^{-55.5}$ and $2^{-45.5}$; the second dominates. Raising the
source to $\beta = 384$ gives $2^{-109.5}$, which restores a PRF term below ML-KEM's $2^{-98}$.

For reference, the corresponding classical random-oracle bound would read $2^{-196}$ at $\beta=256$ and
$2^{-68}$ at $\beta=128$.The gap between that $2^{-68}$ and the correct quantum figures---approximately
$2^{-5}$ under the monolithic model and $2^{18.5}$, i.e.\ vacuous,
under the multi-oracle bound---is why $128$ bits are insufficient under
either analysis. All security theorems of \S\ref{sec:sec} hold under either
interpretation; only the numeric ceiling of \S\ref{sec:eck} changes.

\section{CK$^+$ QKD-Oracle Composition}\label{app:qkd}
\begin{theorem}[QKD-oracle source composition]\label{thm:qkdck}
In the CK$^+$ QKD-oracle model of H\"ovelmanns et al.~\cite{Hovelmanns25}, a fresh HDKEM test session whose QKD entry is neither leaked nor overridden has the eCK bound of Theorem~\ref{thm:eck} with $\eps_{\mathsf{src}}=0$ for the ideal QKD slot. If a concrete QKD service realizes that ideal key-delivery interface within trace distance $\eps_{\mathsf{qkd}}$, the bound increases additively by $\eps_{\mathsf{qkd}}$.
\end{theorem}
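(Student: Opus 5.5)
The plan is to reuse the game sequence $G_0\to G_4$ from the proof of Theorem~\ref{thm:eck}. The source interface $\mathcal{O}_{\mathsf{Src.Get}}/\mathcal{O}_{\mathsf{Src.GetWithID}}$ is instantiated by the CK$^+$ QKD oracle of H\"ovelmanns et al.~\cite{Hovelmanns25}, and the proof then proceeds in two layers: first the ideal oracle, then the concrete service.

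\emph{Ideal-oracle layer.} The reduction $\mathcal{B}$ runs the eCK experiment for \HDKEM and answers every QKD-oracle query itself: key delivery by $\kid$, leakage, and replacement. It forwards leakage and replacement queries exactly as the CK$^+$ model prescribes, and it marks the affected $\kid$ values as non-fresh.

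\begin{itemize}
\item \textbf{Hops $G_1$ and $G_2$.} These are unchanged. $\kid_3$ lies in the signed $\tau$, so any substitution of the key identifier in an accepted session is already charged to $\eps_{\mathsf{sig}}$. A session that passes $G_1$ therefore references the same $\kid_3$ as its matching session. This is exactly the key-ID freshness condition of \S\ref{sec:eckcases}, and it excludes dependent-key attacks.
\item \textbf{Hop $G_3$.} For the test session, freshness says the entry indexed by $\kid_3^*$ was neither leaked nor overridden. In the ideal CK$^+$ oracle such an entry is uniform and independent of every other oracle output and of the whole adversarial view. Consequently, switching $k_3^*$ to an independent uniform $u_3$ costs nothing, so $\eps_{\mathsf{src}}=0$.
\item \textbf{Hop $G_4$.} This applies Definition~\ref{def:skprf} as in Case~3 of Appendix~\ref{app:combproof}. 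The uniform $u_3$ has full min-entropy, so Lemma~\ref{lem:skprf} (or the multi-oracle bound) applies with $\beta$ equal to the key length.
\end{itemize}

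Cases C1--C4 need no change, since $k_3$ survives in all of them.

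\emph{Concrete layer.} Let $\mathsf{Exp}_{\mathrm{ideal}}$ and $\mathsf{Exp}_{\mathrm{real}}$ be the eCK experiments run with the ideal oracle and with the concrete QKD service. The whole experiment, including the adversary and the rest of the protocol, is a single quantum process that interacts with the key-delivery interface. Trace distance is contractive under such processing, so
\[
  \bigl|\Pr[\mathsf{Exp}_{\mathrm{real}}\Rightarrow1]-\Pr[\mathsf{Exp}_{\mathrm{ideal}}\Rightarrow1]\bigr|\le\eps_{\mathsf{qkd}}.
\]
Adding this gap to the ideal bound gives the stated additive increase.

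The main obstacle is the ideal-layer claim that the test entry is independent of the adversary's view once other entries have been leaked or replaced. The concern is that a replacement query on some $\kid\neq\kid_3^*$ might correlate with $k_3^*$, or that an honest party might be induced to fetch $k_3^*$ under a different session's context. I would handle this in two steps:
\begin{enumerate}
\item Invoke the CK$^+$ guarantee that distinct, unqueried key identifiers yield independent outputs.
\item Argue that any honest session consuming the entry $\kid_3^*$ must have a $\tau$ containing $\kid_3^*$. By the single-use property of \S\ref{sec:source} and signature binding, that session is the test session's partner. Otherwise it would be a forgery, already excluded in $G_1$.
\end{enumerate}
A remaining subtlety is that the PRF freshness condition of Definition~\ref{def:skprf} must hold across oracle calls. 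It does, because $c$ and $\kid_3$ are both in $\tau$.
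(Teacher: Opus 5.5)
Your proposal is correct and follows essentially the paper's argument. Signing $\kid_3$ inside $\tau$ (charged to $\eps_{\mathsf{sig}}$) and the Finished check of $G_2$ together force both matching sessions onto the same clean CK$^+$ entry; that entry is already uniform in the ideal oracle, so the $G_3$ hop costs $\eps_{\mathsf{src}}=0$, and the concrete service then adds $\eps_{\mathsf{qkd}}$ additively. Your explicit contractivity argument and your treatment of other sessions consuming $\kid_3^*$ only spell out what the paper leaves implicit, with one small correction: a session that consumes $\kid_3^*$ but then fails verification need not be the partner, yet it outputs nothing and uses a distinct $\tau$, so your conclusion is unaffected.
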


H\"ovelmanns et al.~\cite{Hovelmanns25} model a QKD service by two key-delivery calls: one endpoint obtains a fresh uniform key together with a unique public identifier, and the partner retrieves the matching key by that identifier. Their adversarial interface additionally permits leakage and replacement/override of QKD entries, which is necessary to expose dependent-key attacks. HDKEM maps these calls to $\mathsf{Src.Get}$ and $\mathsf{Src.GetWithID}$ and carries $\kid_3$ in $\rho_2$.

For the test session, cleanliness excludes leakage or override of the referenced QKD entry. Because $\kid_3$ is part of $\tau$, changing it after the client signature requires a ring-signature forgery. Replaying a valid transcript with an identifier that resolves to a different source key gives the two endpoints different $\kconf$ values and is rejected by Finished except with the AEAD/PRF terms already charged in Theorem~\ref{thm:eck}. Thus, conditioned on the usual authentication and confirmation games, the QKD slot used by both matching sessions is exactly the ideal oracle's uniform key, establishing Theorem~\ref{thm:qkdck}.

The theorem deliberately inherits the scope of the 2025 QKD-oracle model. First, its oracle returns an ideal uniform key; when a concrete QKD implementation is only $\eps_{\mathsf{qkd}}$-close to that key-delivery functionality, HDKEM's bound gains $\eps_{\mathsf{qkd}}$ under that realization assumption. Second, the CK$^+$ game is not a universal-concurrent-composition theorem: cross-protocol use of the same QKD service can require a stronger composable model, though Remark~\ref{rem:qkddomain} gives a deployment-level mitigation. Third, trusted-repeater QKD networks require additional trust assumptions, and CK$^+$ does not provide post-compromise security. None of these limitations re-opens the key-ID/dependent-key issue addressed by the protocol binding above.

\begin{remark}[Domain separation for shared QKD services]\label{rem:qkddomain}
If several protocols consume keys from the same QKD service, deployments
should domain-separate their key namespaces, for example by associating each
protocol with a distinct public protocol identifier and rejecting identifiers
presented in the wrong namespace. Such separation prevents direct
cross-protocol key-ID substitution: an identifier from another namespace is
either rejected by the source interface or changes the authenticated
transcript.

This is a deployment-level separation measure, not a general concurrent
composition theorem. Shared service state, correlated allocation, and other
cross-protocol interactions remain outside the CK$^+$ game considered here.
\end{remark}

\section{Source-less Fallback and Larger Rings}\label{app:fallback}
\paragraph{Source-less fallback:}
Without $k_3$, $S$ signs Step 1 under $\rsk_S$ to restore mutual explicit authentication. The protocol
retains OR-confidentiality over $k_1$ and $k_2$ (Thm.~\ref{thm:comb} with the third term dropped) and
resistance to active man-in-the-middle, but is eCK-secure only in C1 (\S\ref{sec:eckcases}).
Deniability still rests on the ring signature, with the two-signature bound of
Remark~\ref{rem:srcless}. The mode is bound into $\tau$. Where no pairwise secret is available,
\S\ref{sec:staticcmp} recommends a static-ephemeral design over this fallback.

\paragraph{Larger rings:}
For $|R|=m>2$, signing time grows linearly in $m$ and $\Auth$ grows with the number of commissioned
members, per Theorem~\ref{thm:alloc}(i) applied to each. All theorems hold unchanged, since they invoke
ring signature security generically; only the concrete $(\mu,\delta)$ and cost figures change.

\end{document}